\documentclass[11pt]{article}
\usepackage[utf8]{inputenc}
\usepackage[left=1in, right=1in, bottom=1.15in, top=1.1in]{geometry}

\usepackage{MG}

\title{Quadratic Speedup for Computing Contraction Fixed Points}

\author{
Xi Chen\footnote{Supported by NSF grants IIS-1838154, CCF-2106429 and CCF-2107187.}\\Columbia University\\\url{xichen@cs.columbia.edu}\hspace{-0.2cm}
\and Yuhao Li\footnote{Supported by NSF grants IIS-1838154, CCF-2106429 and CCF-2107187.}\\Columbia University\\\url{yuhaoli@cs.columbia.edu}\hspace{-0.2cm}
\and Mihalis Yannakakis\footnote{Supported by NSF grants CCF-2107187, CCF-2212233, and CCF-2332922.}\\Columbia University\\\url{mihalis@cs.columbia.edu}\vspace{0.1cm}
}

\date{}
\begin{document}
\maketitle
\begin{abstract}

We study the problem of finding an $\eps$-fixed point of a contraction map $f:[0,1]^k\mapsto[0,1]^k$ under both the $\ell_\infty$-norm and the $\ell_1$-norm. 
For both norms, we give an algorithm~with~running time $O(\log^{\lceil k/2\rceil}(1/\eps))$, for any constant $k$. These improve upon the previous best $O(\log^k(1/\eps))$-time algorithm for the $\ell_{\infty}$-norm by Shellman and Sikorski~\cite{SS03}, and the previous best $O(\log^k (1/\eps ))$-time algorithm for the $\ell_{1}$-norm by Fearnley, Gordon, Mehta and Savani~\cite{FGMS20}.
\end{abstract}

\thispagestyle{empty}
\newpage 
\setcounter{page}{1}
\section{Introduction}

A map $f:\calM\mapsto \calM$ defined on a metric space $(\calM,d)$ is called a $(1-\gamma)$-\emph{contraction} map for~some $\gamma\in(0,1]$ if it satisfies 
$d(f(x),f(y))\leq (1-\gamma)\cdot d(x,y)$ for all $x,y\in\calM$. Banach~\cite{Ban1922} proved~in 1922 that every contraction has a \emph{unique} fixed point. 
Notably, his proof is constructive: iteratively applying the contraction map $f$ starting from any point $x\in \calM$ always converges to its unique fixed point.
Banach's foundational result, now known as the Banach fixed point theorem, has profoundly influenced numerous disciplines over the past century, including mathematics~\cite{CL55,nash1956imbedding,gunther1989}, economics and game theory~\cite{Shapley53,stokey1989recursive}, and computer science~\cite{bellman1957markovian,howard1960dynamic,De67}. From an algorithmic perspective, many~fundamental computational problems can be reduced to finding the unique fixed point or a high-precision approximate fixed point, which motivates the study~of~fixed point computation in contractions \cite{STW93,HKS99,SS02,SS03,DP11,DTZ18,FGMS20,CLY25,batzioumonotonecontraction,haslebacher2025query}.

One of the most natural and important metric spaces to study 
the computation of \mbox{contraction} fixed points 
 is the cube $[0,1]^k$ with respect to the $\ell_{\infty}$-norm. 
In particular, {well-studied} games~from formal verification, such as parity games~\cite{EJ91,CJKLS22}, mean-payoff games~\cite{ehrenfeucht1979positional,ZP96} as well as simple stochastic games~\cite{Con92}, can all be  reduced to finding~an $\eps$-fixed point\footnote{A point $x\in \calM$ is said to be an $\eps$-approximate fixed point, or simply an $\eps$-fixed point, of $f$ if $d(x,f(x))\le \eps$. For $[0,1]^k$ with respect to the $\ell_p$-norm, this means an $x\in [0,1]^k$ that satisfies $\|f(x)-x\|_p\le \eps$.} of~a $(1-\gamma)$-contraction $f:[0,1]^k\mapsto [0,1]^k$ under the $\ell_{\infty}$-norm, where the dimension $k$ is polynomial but $\eps$ and $\gamma$ are both inverse exponential in the input size of the game. These games are not only important in verification, but also have an intriguing complexity status: Their decision versions are in $\UP\cap\coUP$ but whether polynomial-time algorithms exist remains long-standing open questions. Indeed, the simplest proof of their $\UP\cap\coUP$-membership is based on the contraction fixed point and crucially utilizes the uniqueness guaranteed by Banach's  theorem~\cite{jurdzinski1998deciding,EY10}. 

More recently, a computational problem named ARRIVAL has attracted considerable attention \cite{dohrau2017arrival,karthik2017did,gartner2018arrival,gartner2021subexponential,haslebacher2025arrival}. ARRIVAL can be reduced to finding an~$\eps$-fixed point of~a~$(1-\gamma)$-contraction $f:[0,1]^k\mapsto [0,1]^k$ with respect to the $\ell_{1}$-norm~\cite{gartner2018arrival,haslebacher2025arrival}, and~with similar dependence of $k$ (polynomial), $\eps$ and $\gamma$ (inverse exponential) on the input size. It also shares a similar complexity status: The decision version of \mbox{ARRIVAL~is} known to be in $\UP\cap\coUP$ (again, one approach for showing this is via the contraction fixed point) but no polynomial-time algorithm is known. \medskip

\noindent\textbf{Black box model.}
In this paper, we aim to extend the algorithmic boundary concerning contraction fixed points over $[0,1]^k$ with respect to both the $\ell_{\infty}$-norm and the $\ell_{1}$-norm. We will work on the \emph{black-box} model, where an algorithm can access the unknown contraction map $f:[0,1]^k\mapsto[0,1]^k$ via a query oracle: given the inputs $k,\eps,\gamma$ and the choice of norm (either $\ell_{\infty}$ or $\ell_{1}$), in each round, the algorithm can send a point $x\in[0,1]^k$ to the oracle and receive $f(x)$ in response. 

We denote the two  problems by $\Contraction_{\infty}(\eps,\gamma,k)$ and $\Contraction_{1}(\eps,\gamma,k)$, corres\-pon\-ding to 
  the $\ell_\infty$-norm and the $\ell_1$-norm, respectively.
While the \emph{query complexity} of an algorithm is about the number of oracle queries it uses in the worst case, the focus of this paper is on the~\emph{time complexity}, i.e., the total computational time needed to find an $\eps$-fixed point, assuming that the oracle answers each query in one time step. %
Clearly, the time complexity is always at least as large as the query complexity, but not necessarily vice versa.\medskip

\noindent \textbf{Prior work on contraction fixed point computation.} While Banach's value iteration method works for any metric space $(\calM,d)$, it requires $O((1/\gamma)\log(1/\eps))$ time to converge to an $\eps$-fixed~point which leads to exponential-time algorithms for all applications discussed above. 

The first algorithms designed specifically for $\Contraction_{\infty}(\eps,\gamma,k)$ over $[0,1]^k$  under the $\ell_\infty$-norm were by Shellman and Sikorski. They gave an $O(\log (1/\eps))$-time algorithm when $k=2$~\cite{SS02}  (note~that a simple binary search can solve $\Contraction_{\infty}(\eps,\gamma,1)$ in $O(\log (1/\eps))$ time)~and later an $O(\log^k(1/\eps))$-time algorithm for any constant $k$ \cite{SS03}. For $\ell_1$-norm, Fearnley, \mbox{Gordon,~Mehta} and Savani \cite{FGMS20} gave an algorithm for $\Contraction_{1}(\eps,\gamma,k)$ with a similar $O(\log^k(1/\eps))$ {running time for any constant $k$}. 

We note that these time complexity upper bounds are much better than that of Banach's value iteration method when $k$ is a constant and 
the other two main parameters $\gamma$ and $\eps$ are polynomially related (or even when their logarithms are polynomially related); they remain the best upper bounds for this setting before our work. 
We also note that these upper bounds 
do not \mbox{depend} on $\gamma$; indeed, algorithms of \cite{SS02,SS03,FGMS20} can solve the more general problems of $\NonExp_{\infty}(\eps,k)$, the problem of finding an $\eps$-fixed point in a nonexpansive\footnote{A map $f:\calM\rightarrow \calM$ is said to be nonexpansive if $d(f(x),f(y))\le d(x,y)$ for all $x,y\in\calM$.} map over $[0,1]^k$ under the $\ell_{\infty}$-norm, and $\NonExp_{1}(\eps,k)$, the same problem under the $\ell_{1}$-norm (see \Cref{def:nonexpansive}).

Before presenting the main contributions of the paper, which is on time complexity, we notice that  
  significant progress has been made 
  on the \emph{query} complexities~of
  $\NonExp$ and $\Contraction$.
In \cite{CLY25} Chen, Li and Yannakakis gave an algorithm for $\NonExp_\infty(k,\eps)$ with query complexity $O(k\log(1/\eps))$. Later Haslebacher, Lill, Schnider and  Weber \cite{haslebacher2025query} obtained an algorithm~for $\Contraction_{p}(\eps,\gamma,k)$ for any $\ell_p$-norm of $p\ge 1$ with  $O(k^2\log(1/(\eps\gamma)))$ queries.
However, these algorithms use brute force $\Omega((1/\eps)^k)$-time to search for the next point to query in  each round.\medskip

\noindent\textbf{Our contributions.} 
Our main results are a quadratic speedup over algorithms of \cite{SS03,FGMS20} for $\NonExp_\infty(\eps,k)$ and $\NonExp_1(\eps,k)$, detailed as follows: 
\begin{restatable}{theorem}{theoremlinfty}\label{theorem: linfty}
	For any constant $k$, there is an $O(\log^{\lceil k/2\rceil}(1/\eps))$-time algorithm for $\NonExp_{\infty}(\eps,k)$.
\end{restatable}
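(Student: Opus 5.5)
We reduce the problem to a recursive procedure that removes \emph{two} coordinates per level of recursion while paying only one $O(\log(1/\eps))$ factor. Shellman and Sikorski remove one coordinate per level: binary search on $x_k$, solving a $(k-1)$-dimensional instance at each probe. Our plan instead starts from a two-dimensional search in the style of the $k=2$ algorithm of \cite{SS02}, which takes $O(\log(1/\eps))$ steps. The aim is for each step of this outer search to consume one recursive call on a $(k-2)$-dimensional instance. The resulting recurrence is $T(k)=O(\log(1/\eps))\cdot T(k-2)$, with $T(1),T(2)=O(\log(1/\eps))$, which gives $O(\log^{\lceil k/2\rceil}(1/\eps))$.

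\textbf{Step 1: slice instances stay nonexpansive.} Fix the last two coordinates $(a,b)$. Consider the map on the first $k-2$ coordinates $g_{a,b}(y)=f(y,a,b)_{[1..k-2]}$. It is nonexpansive in $\ell_\infty$, so its (approximate) fixed points form a well-behaved set. We also use the monotonicity available in $\ell_\infty$: if $y^*$ is an approximate fixed point of $g_{a,b}$, the residual $f(y^*,a,b)_{k-1,k}-(a,b)$ behaves like a $2$-dimensional displacement field. Moving $(a,b)$ by $\delta$ in $\ell_\infty$ changes the slice fixed point by at most $\delta$. This is the standard argument used for the $(k-1)$-dimensional reduction, and it shows that the induced $2$-dimensional map $h(a,b)=f(y^*_{a,b},a,b)_{k-1,k}$ is ``approximately nonexpansive'' with an error controlled by the slice precision.

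\textbf{Step 2: run the 2D algorithm on $h$.} We use the structure of the algorithm of \cite{SS02}. It maintains a region of the square guaranteed to contain the fixed point and shrinks it geometrically. Its cuts are driven by the \emph{signs} of $h(a,b)-(a,b)$ in the two coordinates at queried points, together with the $\ell_\infty$ cone constraints: a nonexpansive map's fixed point $z$ satisfies $\|z-p\|_\infty\le\|z-f(p)\|_\infty$. Each outer query becomes a recursive solve of the $(k-2)$-dimensional slice to precision $\eps'=\eps/\mathrm{poly}(k)$. We then read off the two remaining displacement coordinates and check that the sign and cone information remains valid up to $O(\eps')$ slack. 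The outer search ends in $O(\log(1/\eps))$ rounds with a point $(a,b)$ where $h$ is $\eps$-close. Together with the slice solution, this gives a point $x$ with $\|f(x)-x\|_\infty\le\eps$. All non-query bookkeeping per round is $O(1)$ for constant $k$, so the time equals the query count up to constants.

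\textbf{Main obstacle.} The hard part is Step 2: certifying that the 2D cutting rule stays correct when $h$ is known only approximately and is not itself a genuine nonexpansive map. The slice fixed point need not be unique, so $h$ can be set-valued. The first thing we would try is to pick a canonical slice solution, for example the coordinatewise-extreme approximate fixed point. Because the slice fixed-point set is a $1$-Lipschitz (in $\ell_\infty$) correspondence of $(a,b)$, we would then show that the cone inequality $\|z-p\|_\infty\le\|z-h(p)\|_\infty+O(\eps')$ still holds for the true projected fixed point $z$. This inequality is the only property the 2D region-shrinking argument uses. Errors accumulate additively over the $O(\log(1/\eps))$ rounds only at the top level, and recursion depth is $k/2$, so shrinking $\eps'$ by a constant-dependent factor keeps every level within a constant-factor loss. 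For odd $k$, one base level uses plain binary search.
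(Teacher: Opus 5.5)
\textbf{Gap: the answers fed to the outer 2D search are not consistent with any nonexpansive map.} Your recursion skeleton is the same as the paper's: an outer two-dimensional search in which each outer query costs one $(k-2)$-dimensional slice solve, giving $T(k)=O(\log(1/\eps))\cdot T(k-2)$ (the paper's decomposition theorem with $b=2$, $a=k-2$). But the step that makes this work is missing, and the obstacle you flag is never resolved.

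Slice fixed points of a nonexpansive map are not unique. So Step 1's claim that ``the slice fixed point moves by at most $\delta$'' holds only existentially: for each slice fixed point at $p$ there is one at $p'$ within $\|p-p'\|_\infty$. The point your recursive call actually returns can jump between nearby outer queries. Take $f(y,a,b)=(y,y,b)$. Every $y$ is a slice fixed point, and $h(a,b)=(y^*_{a,b},b)$ can change by $1$ between arbitrarily close queries. Hence $h$ is not approximately nonexpansive with error controlled by the slice precision, and a 2D cutting algorithm fed these values can be driven to an empty search region.

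Your proposed canonical selection does not repair this, for three reasons.
\begin{enumerate}
\item There is in general no coordinatewise-extreme (approximate) slice fixed point. For $g(y_1,y_2)=(1-y_2,1-y_1)$ the fixed-point set is the anti-diagonal, which is not closed under coordinatewise max.
\item The recursive solver would not compute such a point within its time budget.
\item Its $1$-Lipschitz dependence on $(a,b)$ is asserted, not proved.
\end{enumerate}
Step 2 also needs a new proof that the internal cutting rules of \cite{SS02} survive $O(\eps')$ perturbations, which you only claim. The cone inequality you quote is also reversed: nonexpansiveness gives $\|z-h(p)\|_\infty\le\|z-p\|_\infty$. And $\ell_\infty$-nonexpansive maps provide no monotonicity to use.

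\textbf{Missing idea: a history-dependent slice selection that makes the answers \emph{exactly} consistent.} When the outer solver issues query $q^t$, search for the slice solution only inside
$\calB(p^t_{\min},p^t_{\max})=\bigcap_{r<t}\calB(y^{(r,*)},y^{(r,*)},\pm\|q^t-q^r\|_\infty)\cap\calB(\alpha_{\min},\alpha_{\max})$,
where the $y^{(r,*)}$ are the earlier slice solutions.
\begin{itemize}
\item This box is nonempty. By induction the earlier solutions satisfy $\|y^{(r,*)}-y^{(r',*)}\|_\infty\le\|q^r-q^{r'}\|_\infty$, so the boxes pairwise intersect, and pairwise-intersecting axis-parallel boxes have a common point.
\item The slice map sends this box into its $\eps$-enlargement. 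This follows from $\|f(y^{(r,*)},q^r)_{[a]}-y^{(r,*)}\|_\infty\le\eps$ and nonexpansiveness. That is why the paper strengthens the problem to $\NonExp^\dagger_\infty$, with range $\calB(\cdot,\cdot,\pm\eps)$, and proves the $k\le 2$ base case for it via truncation plus a local fix-up.
\item The payoff is $\|v^t-v^r\|_\infty\le\|q^t-q^r\|_\infty$ exactly. The answers are then consistent with a genuine nonexpansive map into the $\eps$-enlarged box.
\end{itemize}
Consequently the two-dimensional solver runs as a black box at the same precision $\eps$. No uniqueness, canonical selection, robustness analysis of \cite{SS02}, or error accumulation across levels is needed.
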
\vspace{-0.3cm}
\begin{restatable}{theorem}{theoremlone}\label{theorem: l1}
	For any constant $k$, there is an $O(\log^{\lceil k/2\rceil}(1/\eps))$-time algorithm for $\NonExp_1(\eps,k)$.
\end{restatable}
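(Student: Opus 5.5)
Theorem~\ref{theorem: l1} will be reduced to the $\ell_\infty$ machinery behind Theorem~\ref{theorem: linfty}. The plan is to rebuild the same two-dimensions-per-level recursion, replacing the $\ell_\infty$ geometry of the certifying regions by its $\ell_1$ analogue in the spirit of \cite{FGMS20}. The engine for Theorem~\ref{theorem: linfty} is a recursive decomposition. We fix the last coordinates and solve the problem restricted to a lower-dimensional slice. We then use the answer on that slice to certify a halving (or constant-fraction shrinking) of the search region in the remaining coordinates. The quadratic speedup should come from peeling off \emph{two} coordinates per recursion level at cost $O(\log(1/\eps))$ per level. A one-dimensional binary search costs $O(\log(1/\eps))$, and Shellman--Sikorski \cite{SS02} show the two-dimensional problem also costs only $O(\log(1/\eps))$. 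So if a two-dimensional outer search can be driven by oracle calls to a $(k-2)$-dimensional subproblem, the recurrence is $T(k)=O(\log(1/\eps))\cdot T(k-2)$ with $T(1),T(2)=O(\log(1/\eps))$. This gives $T(k)=O(\log^{\lceil k/2\rceil}(1/\eps))$.

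\textbf{Step 1: slice reduction in $\ell_1$.} Fix values $(a,b)$ for the last two coordinates and consider the induced map on the slice,
\[
g_{a,b}(x_1,\dots,x_{k-2})=\text{first }k-2\text{ coordinates of }f(x,a,b).
\]
This map is nonexpansive on $[0,1]^{k-2}$ under $\ell_1$. I will show that an $\eps'$-fixed point $x^*$ of $g_{a,b}$, with $\eps'$ a polynomially smaller precision, yields two things. Either $(x^*,a,b)$ is already an $\eps$-fixed point of $f$, or the displacement $(f_{k-1}-a,\,f_k-b)$ at $(x^*,a,b)$ determines a sign pattern. For $\ell_1$ the key inequality is the following. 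For any true fixed point $z=(z',z_{k-1},z_k)$,
\[
\|f(p)-z\|_1\le \|p-z\|_1,
\]
where $p=(x^*,a,b)$. Splitting the $\ell_1$ norm into the slice part and the last two coordinates, the slice part is nearly tight because $x^*$ is nearly fixed. This forces a linear constraint on $(z_{k-1},z_k)$, namely that the fixed point lies in a region cut out by a quadrant or half-plane relative to $(a,b)$. The $\ell_1$ version gives a cleaner orientation (half-planes defined by $\pm$ coordinate directions) than $\ell_\infty$, which is why \cite{FGMS20} works coordinate-wise. I will need to track the approximation error $\eps'$ so that the constraint holds up to additive slack.

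\textbf{Step 2: a two-dimensional outer search.} Using these constraints as a separation oracle on the $(z_{k-1},z_k)$ plane, I run a Shellman--Sikorski-style two-dimensional search that shrinks the candidate region geometrically. It should need only $O(\log(1/\eps))$ queries to the slice oracle, each costing $T(k-2)$ time at precision $\eps'=\eps/\mathrm{poly}$. I also need approximately-fixed points to remain approximately fixed after rounding to the final region. Since $k$ is constant, the total precision loss over $\lceil k/2\rceil$ levels is a constant power of $\eps$, so $\log(1/\eps')=O(\log(1/\eps))$ throughout.

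\textbf{Main obstacle.} The hard step is Step 2: showing that the information from a single slice fixed point gives enough of a cut in \emph{two} dimensions simultaneously. Naively one gets a half-plane cut, and half-plane cuts in 2D do not obviously give $O(\log(1/\eps))$ iterations without a centroid-type argument with bounded description size. My first attempt is to show that in $\ell_1$ the cut is always a quadrant-complement aligned with the axes or with the diagonals. I would then maintain the candidate region as an axis/diagonal-aligned octagon and argue that a well-chosen query point, such as its center, removes a constant fraction of its area. If that fails, I would fall back on mimicking \cite{SS02}'s 2D $\ell_\infty$ argument after the $45^\circ$ rotation that relates $\ell_1$ and $\ell_\infty$ in the plane. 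That rotation is exact in two dimensions and so applies to the projected outer problem.
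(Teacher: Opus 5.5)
\textbf{Gap: the cut lemma is missing, and as stated for nonexpansive maps it is false.} Your architecture matches the paper's algorithm. You fix the two outer coordinates at the center of a maintained rectangle, recursively solve the $(k-2)$-dimensional slice at higher precision, query the resulting point, and shrink the rectangle, with $O(\log(1/\eps))$ outer rounds per level. The problem is the cut in Step~1, which is the actual content of the proof.

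The inequality $\|f(p)-z\|_1\le\|p-z\|_1$ only excludes points $z$ that are strictly $\ell_1$-closer to $p$ than to $f(p)$. That set is not a half-plane or a quadrant. Suppose the outer displacement at $p=(x^*,a,b)$ is $(d,d)$ and the slice is exactly fixed. Then every $z$ with $z_{k-1}\le a$ and $z_k\ge b+d$ is exactly equidistant from $p$ and $f(p)$, and so is every $z$ with $z_{k-1}\ge a+d$ and $z_k\le b$. Both pieces survive, and the surviving region is non-convex. The $\eps'$ slack from an approximate slice solution makes this worse: the slice contribution lies in $[-\eps',\eps']$, so it only enlarges the surviving set.

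Step~2, which you flag as the main obstacle, therefore has no clean cut to work with, and neither fallback is justified. An octagon cannot be maintained when the surviving set is non-convex. The rotated Shellman--Sikorski route needs a genuine 2D nonexpansive map. The outer map $(a,b)\mapsto f(x^*(a,b),a,b)_{k-1,k}$, built from approximate slice solutions, is only nonexpansive up to an additive $2\eps'$, and nothing shows SS02 tolerates that. The paper itself notes that a decomposition-style approach is not known to work in $\ell_1$.

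\textbf{What fills the gap.}
\begin{itemize}
\item First pass to the strict contraction $g=(1-\eps/(2k))f$. Its $\eps/2$-fixed points are $\eps$-fixed points of $f$, with $\gamma=\eps/(2k)$, so $\log(1/\eps\gamma)=O(\log(1/\eps))$ for constant $k$. Slices of $g$ are again $(1-\gamma)$-contractions.
\item Let $x^*$ be the unique fixed point. If $\|f(x)-x\|_1>\eps$, then $\|x-x^*\|_1>\eps/2$. Contraction then gives the positive margin $\|x-x^*\|_1-\|f(x)-x^*\|_1\ge\gamma\|x-x^*\|_1>\gamma\eps/2$. This margin is what eliminates the equidistant regions above.
\item Recurse on the slice to precision $\eps\gamma/4$. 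Take the outer coordinate $i$ with the larger $|f(x)_i-x_i|$, a weakly dominating coordinate.
\item Split $\|x-x^*\|_1-\|f(x)-x^*\|_1$ into the two outer terms and the slice terms. If $f(x)_i>x_i$ but $x^*_i\le x_i+\eps\gamma/8$, the outer part is at most $\eps\gamma/4$ and the slice part at most $\eps\gamma/4$. That contradicts the margin. Hence $x^*_i>x_i+\eps\gamma/8$, and the mirror statement holds when $f(x)_i<x_i$.
\item This is an axis-aligned cut through the rectangle's center. The search region stays a rectangle and one side halves each round, so no octagon, centroid argument or rotated SS02 is needed.
\item A rectangle side already shorter than $\eps\gamma/4$ can never be the one cut. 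So there are $O(\log(1/\eps\gamma))$ rounds per level, and recursing at precision $\eps\gamma/4$ over $k/2$ levels gives the $O(\log^{\lceil k/2\rceil}(1/\eps))$ bound.
\end{itemize}
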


The constants hidden in the two theorems are $c^k$ for some constant $c$ and $k^k$, respectively.
Below we give a high-level overview of our algorithms behind \Cref{theorem: linfty} and \Cref{theorem: l1}.

\subsection{Technical Overview}\label{subsection:overview}

The proof of \Cref{theorem: linfty} starts by introducing a generalization of problem $\NonExp_\infty(\eps,k)$,~called $\NonExp_\infty^\dagger(\eps,k)$, where the unknown nonexpansive map $f$ is from $[0,1]^k$ to $[-\eps,1+\eps]^k$ (instead of $[0,1]^k$). One can still show that $f$ must have an $\eps$-fixed point $x\in [0,1]^k$ satisfying $\|f(x)-x\|_\infty\le \eps$, and the goal of the problem is to find such an $\eps$-fixed point of $f$. (See \Cref{def:nonexpansivegeneralization} for the formal definition of $\NonExp_\infty^\dagger(\eps,k)$.)
Given that $\NonExp_\infty(\eps,k)$ is a special case of $\NonExp_\infty^\dagger(\eps,k)$, it suffices to give an algorithm for $\NonExp_\infty^\dagger(\eps,k)$ with running time $O(\log^{\lceil k/2\rceil} (1/\eps))$.

The proof of \Cref{theorem: linfty}  proceeds in two steps.
First we give an $O(\log(1/\eps))$-time algorithm~for $\NonExp_\infty^\dagger(\eps,k)$ when $k=1$ or $2$. The case when $k=1$ can be solved by a simple binary search; the case when $k=2$, on the other hand, is more involved and is done by a direct reduction to the original problem
  $\NonExp_\infty(\eps,2)$ which is known to have  an $O(\log(1/\eps))$-time algorithm \cite{SS02}.
  
Next we prove a new \emph{decomposition theorem} for $\NonExp_\infty^\dagger(\eps,k)$ (\autoref{thm: linfty decomposition}): If $\NonExp_\infty^\dagger(\eps,a)$ and $\NonExp_\infty^\dagger(\eps,b)$ can be solved in time $t(\eps,a)$ and $t(\eps,b)$, respectively, then $\NonExp_\infty^\dagger(\eps,a+b)$ can be solved in time $O(t(\eps,a)\cdot t(\eps,b))$. 
\Cref{theorem: linfty} then follows by combining the two steps.

To the best of our knowledge, this is the first decomposition theorem for contraction fixed points; previously such decomposition theorems were only known for Tarski fixed points in monotone maps~\cite{FPS22,CL22}.
At a high level, given oracle access to a nonexpansive $f:[0,1]^{a+b}\rightarrow [-\eps,1+\eps]^{a+b}$, the~algorithm (\autoref{alg: linfty}) works by running a $t(\eps,b)$-time algorithm $\calA_2$ on a nonexpansive map $g:[0,1]^b\rightarrow [-\eps,1+\eps]^b$ that is built adaptively query by query: For each point $q^t\in [0,1]^b$ that $\calA_2$ would like to query in round $t$, \autoref{alg: linfty} needs to work on $f$ and return an answer $v^t\in [-\eps,1+\eps]^b$ as the result of $g(q^t)$. We would like the following three conditions to hold:
\begin{flushleft}\begin{enumerate}
\item All results $(q^1,v^1),\ldots,(q^t,v^t)$ about $g$ returned by \autoref{alg: linfty} to $\calA_2$ are consistent with $g$ being a nonexpansive map from $[0,1]^b$ to $[-\eps,1+\eps]^b$. (This guarantees that $\calA_2$ terminates within $t(\eps,b)$ time and in particular, $t(\eps,b)$ rounds, with an $\eps$-fixed point of $g$.)
\item When $\calA_2$ terminates in round $t$ with $q^t$ being an $\eps$-fixed point of $g$, i.e, $\|v^t-q^t\|_\infty\le \eps$, we can use it to obtain an $\eps$-fixed point of the original map $f$ efficiently in time. 
\item For each round, given the query  $q^t$ from $\calA_2$, the time spent by \autoref{alg: linfty} to compute its response $v^t$ to $\calA_2$ should take no more than $t(\eps,a)$ time (i.e., the time needed to run the best algorithm to solved a $\NonExp_\infty^\dagger(\eps,a)$ instance).
\end{enumerate}\end{flushleft}
The challenge behind the design of \autoref{alg: linfty}  lies in satisfying all three conditions simultaneously.
In particular, to return a $v^t$ that satisfies the first condition, \autoref{alg: linfty} cannot just run an $t(\eps,a)$-time algorithm $\calA_1$ to find an $\eps$-fixed point in $f$ restricted to the $[0,1]^a$ cube by setting the last $b$ coordinates to $q^t$. Instead, \autoref{alg: linfty} works by focusing on a carefully constructed $a$-dimensional cube defined by $p_{\min}^t$ and $p_{\max}^t$ (see \autoref{alg: linfty} for details). The use of the slightly harder problem $\NonExp_\infty^\dagger$ (instead of $\NonExp_\infty$) is also crucial to the success of the decomposition theorem; in spirit this is similar to situations where certain inductions work only when one starts with a slightly stronger basis and inductive hypothesis.

The proof of \Cref{theorem: l1} follows a similar strategy, namely, solve low dimensions directly and then use recursion for higher dimensions. 
However, there are key differences in both steps due~to  fundamental differences between the $\ell_\infty$ and $\ell_1$ norms.
In particular, our goal is to give an algorithm for $\Contraction_{1}(\eps,\gamma,k)$ (\autoref{algorithm: time}) with running time $O(k\log^{\lceil k/2\rceil}(1/\eps\gamma))$; \Cref{theorem: l1} follows from a reduction from $\NonExp_1(\eps,k)$ to $\Contraction_{1}(\eps/2,\eps/(2k),k)$ (see Observation~\ref{observation: l1 nonexp to contraction}).  

When $k=1$, the same simple binary search still works (indeed, $\ell_p$ behaves exactly the same for all $p\in[1,\infty)\cup \set{\infty}$ when $k=1$). For the case $k=2$, our $O(\log(1/\eps\gamma))$-time algorithm at a high level works as follows: Suppose that we query the middle point $x=(1/2,1/2)$ and get $f(x)$ such that $\norm{f(x)-x}_1>\eps$. For simplicity, assume $f(x)_1\geq x_1$ and $f(x)_2\geq x_2$. Then we can infer that $f(y)\neq y$ for all $y\in[0,1]^2$ such that $\norm{x-y}_1\leq \norm{f(x)-y}_1$, which is because $\norm{f(x)-f(y)}_1\leq (1-\gamma)\norm{x-y}_1<\norm{f(x)-y}_1$. The key idea then is to analyze the set of such $y$ by looking at a ``dominating coordinate,'' namely, a coordinate $i\in\set{1,2}$ such that $|f(x)_i-x_i|\geq |f(x)_{3-i}-x_{3-i}|$. Suppose that $f(x)_1-x_1\geq f(x)_2-x_2$, then we claim that $f(y)\neq y$ for all $y\in[0,1/2]\times[0,1]$. This is because $\norm{f(x)-y}_1-\norm{x-y}_1\geq (f(x)_1-x_1)-|f(x)_2-x_2|\geq 0$. Intuitively, this means that we can use one query to eliminate the search space by half, and the remaining space remains a nice rectangle so that we know the next query point should be the center of the rectangle. With more details being worked out (such as how to use the condition $\norm{f(x)-x}_1>\eps$), we obtain an $O(\log(1/\eps\gamma))$-time algorithm for $\Contraction_1(\eps,\gamma,2)$.

For higher dimensions $k>2$, it is not clear whether the decomposition approach can be applied due to some technical challenges. Instead, we directly generalize our approach above, which analyzes the remaining search space by identifying a ``dominating coordinate.'' Here, a strawman definition of a dominating coordinate would be an $i$ such that $|f(x)_i-x_i|\geq \sum_{j\neq i}|f(x)_j-x_j|$. However, it is immediately clear that in higher dimensions, such an $i$ might not exist. It turns out that for the approach to work, it suffices to find a \emph{weakly-dominating coordinate}, defined as an $i$ such that $|f(x)_i-x_i|\geq \sum_{j\neq i}|f(x)_j-x_j|-\eps\gamma/4$. This is where the recursion comes in. We fix the first two coordinates, and work on a $k-2$ dimensional rectangle to recursively find an $\eps\gamma/4$-fixed point with respect to the remaining $k-2$ coordinates. Then either $i=1$ or $i=2$ will be a weakly-dominating coordinate. Note that the boost on the approximation precision $\eps\gamma/4$ instead of $\eps$ in recursion is important, and this is why the constants hidden in \Cref{theorem: l1} are $k^k$.

\subsection{Other Related Work}

The problem of
finding an approximate fixed point in a contraction map under the $\ell_2$-norm has~long been~known to admit polynomial-time algorithms~\cite{STW93, HKS99}. The best time bounds for other $\ell_p$-norms (such as $\ell_\infty$ and $\ell_1$ studied in this paper), however, remain exponential in $k$. %

As natural total search problems lacking polynomial-time algorithms, the computational problem of computing contraction fixed points naturally fits in the landscape of $\TFNP$~\cite{MegiddoPapadimitriou}. In fact, it was one of the motivating problems behind the introduction of the class $\CLS$~\cite{DP11}. Recently, to better capture the complexity of problems with intrinsically unique solutions, \cite{FGMS20} defined the subclass $\UEOPL$ and proved (among other problems) the $\UEOPL$-membership of computing the \emph{exact} fixed point of piecewise-linear contraction maps, an easier white-box variant of the black-box problem we consider here. Very recently, Batziou et al. \cite{batzioumonotonecontraction} showed that, \emph{in the black-box model}, finding an \emph{approximate} fixed point for maps that are both monotone and contracting under the $\ell_\infty$-norm is in $\UEOPL$. They also provided an algorithm with time complexity $O(\log^{\lceil k/3\rceil}(1/\eps))$, which is faster than the best-known algorithms for maps that are just monotone \cite{CL22} or just contracting (\cite{SS03} and the present paper).

Despite considerable effort in the $\TFNP$ characterization of contraction fixed points under $\ell_p$-norms (e.g. see \cite{EY10,DP11,DTZ18,FGMS20,Hol21,FGHS23}), it remains unclear whether they are hard for any $\TFNP$ class studied in the literature. As mentioned earlier, significant~progress~has been made  on the query complexity of contraction fixed points \cite{CLY25,haslebacher2025query} which in particular ruled outs black-box reductions from known $\TFNP$ classes to contraction fixed points under $\ell_p$-norms; we refer interested readers to the discussion in \cite{CLY25} for further details.
\section{Preliminaries}

We start with the definitions of contraction and nonexpansive maps.

\begin{definition}[Contraction]
Let $0<\gamma<1$ and $(\calM,d)$ be a metric space. A map $f:\calM\mapsto\calM$ is a $(1-\gamma)$-\emph{contraction map} with respect to $(\calM,d)$ if $d(f(x),f(y))\leq (1-\gamma)\cdot d(x,y)$ for all $x,y\in \calM$.

A map $f:\calM\mapsto\calM$ is said to be \emph{nonexpansive} if 
  $d(f(x),f(y))\leq d(x,y)$ for all $x,y\in \calM$.
\end{definition}

Every contraction map has a unique fixed point, i.e. $x^*$ with $f(x^*)=x^*$ by Banach's fixed point theorem. Every nonexpansive map over $[0,1]^k$ with respect to the $\ell_p$-norm, $p\in[1,\infty)\cup\set{\infty}$, has a (not necessarily unique) fixed point by Brouwer's fixed point theorem.
In this paper, we study the time complexity of finding an $\eps$-fixed point of a $(1-\gamma)$-contraction map $f$  over the $k$-dimensional cube $[0,1]^k$ with respect to the $\ell_p$-norm, where $p\in\set{1,\infty}$:

\begin{definition}[$\Contraction_p(\eps,\gamma,k)$]
In problem $\Contraction_p(\eps,\gamma,k)$, we are given 
  oracle access to a $(1-\gamma)$-contraction map $f$ over $[0,1]^k$ with respect to the $\ell_p$-norm, i.e., $f$ satisfies
$$
\norm{f(x)-f(y)}_p\le (1-\gamma)\cdot \norm{x-y}_p,\quad \text{for all $x,y\in [0,1]^k$}.
$$
The goal is to find an $\eps$-fixed point of $f$, i.e., a point $x \in [0,1]^k$ such that $\norm{f(x )-x}_p\leq \eps$. 
\end{definition}

We define similar problems for nonexpansive maps over $[0,1]^k$ with respect to the $\ell_p$-norm:

\begin{definition}[$\NonExp_p(\eps,k)$]\label{def:nonexpansive}
In problem $\NonExp_p(\eps,k)$, we are given 
  oracle access to a nonexpansive map $f :[0,1]^k\rightarrow [0,1]^k$ with respect to the $\ell_p$-norm, i.e., $f$ satisfies
$$
\norm{f(x)-f(y)}_p\le \norm{x-y}_p,\quad \text{for all $x,y\in [0,1]^k$}.
$$
The goal is to find an $\eps$-fixed point of $f$, i.e., a point $x\in [0,1]^k$ such that $\|f(x)-x\|_p\le \eps$. 
\end{definition}

We will use the following notation. 
Given a positive integer $m$, we write $[m]$ to denote the set $\{1,\ldots,m\}$. Given two vectors $\alpha,\beta\in\mathbb{R}^m$, we say $\alpha\preceq \beta$ if $\alpha_i\leq \beta_i$ for all $i\in[m]$. Given $\alpha,\beta\in\mathbb{R}^m$ such that $\alpha\preceq \beta$, we use 
$\calB(\alpha,\beta)\coloneqq \set{x: \alpha\preceq x \preceq \beta}$ to denote the \emph{box}  defined by $\alpha$ and $\beta$. Given $\alpha,\beta\in\mathbb{R}^m$ such that $\alpha\preceq \beta$ and $\eps>0$, we 
write $\calB(\alpha,\beta,\pm\eps)$ to denote 
$$\calB(\alpha,\beta,\pm \eps)\coloneqq \big\{x\in \mathbb{R}^m: (\alpha_1-\eps,\cdots,\alpha_m-\eps)\preceq x \preceq (\beta_1+\eps,\cdots,\beta_m+\eps)\big\}.$$

Our main algorithm in \Cref{sec:alg1} for
  $\NonExp_\infty(\eps,k)$ will work on the following generalization:

  \begin{definition}[$\NonExp_\infty^\dagger(\eps,k)$]\label{def:nonexpansivegeneralization}
In $\NonExp_\infty^\dagger(\eps,k)$ we are given $\alpha_{\min},\alpha_{\max}\in [0,1]^k$~with $\alpha_{\min}\preceq \alpha_{\max}$ and oracle access to a nonexpansive map $f:\calB(\alpha_{\min},\alpha_{\max})\rightarrow \calB(\alpha_{\min},\alpha_{\max},\pm \eps)$
with respect to the $\ell_\infty$-norm. The goal is to find an $\eps$-fixed point of $f$, i.e., a point $x\in \calB(\alpha_{\min},\alpha_{\max})$ such~that $\|f(x)-x\|_\infty\le \eps$.
Note that such a point always exists. To see this, we write  $g:\calB(\alpha_{\min},\alpha_{\max})\rightarrow \calB(\alpha_{\min},\alpha_{\max})$ to denote the map obtained by truncating $f$ as follows:
$$
g(x)_i:=\max\big(\min(f(x)_i,\alpha_{\max,i}),\alpha_{\min,i}\big),\quad\text{for every $i\in [k]$.}
$$
It is easy to verify that $g$ remains nonexpansive and thus, has a fixed point $x$ by Brouwer's fixed point theorem.
Any fixed point of $g$ must be an $\eps$-fixed point of $f$.
\end{definition}

\section{Algorithm for $\NonExp_{\infty}(\eps,k)$}\label{sec:alg1}

To prove \autoref{theorem: linfty}, it suffices to give an algorithm that solves the generalization $\NonExp_\infty^\dagger(\eps,k)$ of $\NonExp_\infty(\eps,k)$ in 
  $O(\log^{\lceil k/2\rceil}(1/\eps))$ time.
To this end, we start by noting that when $k\in \{1,2\}$~it follows from Shellman~and Sikorski~\cite{SS02} that $\NonExp_\infty^\dagger(\eps,k)$ can be solved in $O(\log(1/\eps))$ time:

\begin{lemma}\label{lem:ss02}
	There is an $O(\log(1/\eps))$-time algorithm for $\NonExp^\dagger_{\infty}(\eps,k)$ when $k\in \{1,2\}$.
\end{lemma}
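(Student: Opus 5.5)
For $k=1$ I would run a plain binary search on the interval $[\alpha_{\min},\alpha_{\max}]$ and use the sign of $f(x)-x$. The algorithm keeps an interval $[l,r]$ with the invariant $f(l)\ge l-\eps$ and $f(r)\le r+\eps$. This holds at the start because $f$ maps into $[\alpha_{\min}-\eps,\alpha_{\max}+\eps]$. At each step it queries the midpoint $m$:
\begin{itemize}
\item if $|f(m)-m|\le\eps$, it returns $m$;
\item if $f(m)>m+\eps$, it sets $l=m$;
\item otherwise it sets $r=m$.
\end{itemize}
Once $r-l\le\eps$, the invariant together with nonexpansiveness gives the answer. Suppose $f(l)>l+\eps$, which holds in the nontrivial case since $l$ was not returned. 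Then $f(r)\ge f(l)-(r-l)>l+\eps-\eps=l$. Also $f(r)\le r+\eps$, and $f(r)<r-\eps$ is impossible when $r$ came from a $<$-branch unless it is the original endpoint. To handle this cleanly I would use the cleaner invariant $f(l)>l$ and $f(r)<r$, or the endpoint is already an $\eps$-fixed point. When the interval length drops below $\eps$, the function $f(x)-x$ changes sign on $[l,r]$. By continuity and the 2-Lipschitz property of $f(x)-x$, querying the midpoint of an interval of length $\le\eps$ then yields $|f(m)-m|\le\eps$. This takes $O(\log(1/\eps))$ queries and time.

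For $k=2$ I would follow the hint in the overview and reduce directly to $\NonExp_\infty(\eps',2)$ on $[0,1]^2$, which \cite{SS02} solves in $O(\log(1/\eps'))$ time. First I rescale the box $\calB(\alpha_{\min},\alpha_{\max})$ to $[0,1]^2$. Rescaling distorts $\ell_\infty$ distances anisotropically when the side lengths differ, so instead of rescaling each coordinate separately I would embed the box into a square. Concretely, let $s=\max_i(\alpha_{\max,i}-\alpha_{\min,i})$ and define $h$ on a square of side $s$ containing the box. On this square, $h$ is $f\circ\pi$ followed by truncation to the box, where $\pi$ is the coordinatewise clamp onto $\calB(\alpha_{\min},\alpha_{\max})$. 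Since $\pi$ is nonexpansive in $\ell_\infty$ and truncation is nonexpansive, $h$ is a nonexpansive self-map of the square, and its image lies in the box. Scaling by $1/s$ keeps $\ell_\infty$ nonexpansiveness, so we obtain an instance of $\NonExp_\infty(\eps/(2s),2)$ or similar. If $s$ is tiny, every point of the box is already an $\eps$-fixed point, and the scaled precision stays at least $\Omega(\eps)$, so the running time is $O(\log(1/\eps))$.

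Next I must transfer an approximate fixed point $y$ of $h$ back to $f$. Because $h(y)$ lies in the box and $\|h(y)-y\|_\infty\le\delta$, the point $y$ is within $\delta$ of the box, so $x=\pi(y)$ satisfies $\|x-y\|_\infty\le\delta$. Writing $T$ for the truncation, we have $h(y)=T(f(x))$. Hence $\|T(f(x))-x\|_\infty\le2\delta$. Truncation moves a coordinate only when $f(x)_i$ falls outside $[\alpha_{\min,i},\alpha_{\max,i}]$, and then by at most $\eps$. The difficulty is that this gives $\|f(x)-x\|_\infty\le2\delta+\eps$ rather than $\eps$.

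This transfer step is the main obstacle. To fix it I would use a finer analysis. If $f(x)_i>\alpha_{\max,i}$, then $T(f(x))_i=\alpha_{\max,i}$ and $x_i\ge\alpha_{\max,i}-2\delta$, so $f(x)_i-x_i\le\eps+2\delta$, which is still too weak. So I would instead ask the oracle for a $\delta$-fixed point with $\delta=\eps/4$, and compare against the original $f$ rather than $T\circ f$ coordinatewise. The alternative is to reduce to $\NonExp^\dagger$ with target $\eps/2$, treating the truncated map as the instance, and then run a final one-dimensional binary-search correction in the offending coordinate. I expect that the correct bookkeeping gives the $\eps$-fixed point with only a constant-factor loss in precision, and hence time $O(\log(1/\eps))$.
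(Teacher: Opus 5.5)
Your $k=1$ binary search is fine in the form you finally settle on: check the endpoints, maintain $f(l)>l$ and $f(r)<r$, and return the midpoint once $r-l\le\eps$. The intermediate remark that $f(r)<r-\eps$ is ``impossible'' after a $<$-branch is backwards, but you don't need it. Your clamping reduction of the $k=2$ case to \cite{SS02} is also fine, and it even justifies running that algorithm on a sub-box, which the paper does without comment.

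The gap is the transfer step. You flag it as the main obstacle and then leave it at ``I expect that the correct bookkeeping gives\ldots''. That step is the real content of the $k=2$ case, and a smaller $\delta$ alone cannot close it. Take $f(x_1,x_2)=(\min(x_1+2\eps,1+\eps),x_2)$ on $[0,1]^2$. Its $\eps$-fixed points are exactly those with $x_1=1$. Yet every point with $x_1\in[1-\delta,1]$ is a $\delta$-fixed point of the truncated map, so a black-box solver may return a point with $\|f(x)-x\|_\infty=\eps+\delta$ for every $\delta<\eps$. You therefore need explicit post-processing that uses the $\pm\eps$ range condition, and you don't supply one. Your one-dimensional binary search in ``the'' offending coordinate is unspecified: you do not say over what interval it runs, what happens if both coordinates overshoot, or why the other coordinate stays fixed.

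The missing idea is to snap each overshooting coordinate onto the face of the box it overshoots. Let $x$ in the box satisfy $\|T(f(x))-x\|_\infty\le\delta$. Set $z_i=\alpha_{\max,i}$ if $f(x)_i>\alpha_{\max,i}$, $z_i=\alpha_{\min,i}$ if $f(x)_i<\alpha_{\min,i}$, and $z_i=x_i$ otherwise; then $\|z-x\|_\infty\le\delta$.
\begin{itemize}
\item For a coordinate snapped upward, the range condition gives $f(z)_i\le\alpha_{\max,i}+\eps=z_i+\eps$ for free. Nonexpansiveness gives $f(z)_i\ge f(x)_i-\delta>z_i-\delta$. The downward case is symmetric.
\item For an unsnapped coordinate, $|f(z)_i-z_i|\le|f(z)_i-f(x)_i|+|f(x)_i-x_i|\le2\delta$.
\end{itemize}
Hence $\delta=\eps/2$ makes $z$ an $\eps$-fixed point of $f$ at the cost of one extra query. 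With your looser bound $\|T(f(\pi(y)))-\pi(y)\|_\infty\le2\delta$, your choice $\delta=\eps/4$ works the same way.

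The paper does the same thing with $\delta=\eps$, which is why it needs a case analysis. If both coordinates overshoot, it returns the corresponding corner of the box. If only one overshoots, it snaps that coordinate to get $y$. If the other coordinate then fails, it moves that coordinate by (at most) $\eps$ toward $f(y)$ to get $y'$.
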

\begin{proof}
First the problem $\NonExp_\infty^\dagger(\eps,1)$ can be easily solved by binary search in time $O(\log(1/\eps))$ so we focus on $\NonExp_\infty^\dagger(\eps,2)$ below.

Suppose that we are given $\alpha_{\min},\alpha_{\max}\in [0,1]^2$~with $\alpha_{\min}\preceq \alpha_{\max}$ and oracle access to a~nonexpansive map $f:\calB(\alpha_{\min},\alpha_{\max})\rightarrow \calB(\alpha_{\min},\alpha_{\max},\pm \eps)$
under the \mbox{$\ell_\infty$-norm.} We define  $g:\calB(\alpha_{\min},\alpha_{\max})$ $\rightarrow \calB(\alpha_{\min},\alpha_{\max})$ as the truncation of $f$, the same as above:
$$
g(x)_i:=\max\big(\min(f(x)_i,\alpha_{\max,i}),\alpha_{\min,i}\big),\quad\text{for both $i\in \{1,2\}$.}
$$
Given that $g$ is also nonexpansive, we can use the algorithm of Shellman~and Sikorski~\cite{SS02} to find an $\eps$-fixed point of $g$ in time $O(\log(1/\eps))$. Denote such a point by $x^*\in\calB(\alpha_{\min},\alpha_{\max})$.

If $\norm{f(x^*)-x^*}_{\infty}\leq\eps$, then we are done. Otherwise, it must be the case that at least one coordinate $i\in\set{1,2}$ of $f(x^*)$ gets truncated, and $|f(x^*)_i-x^*_i|>\eps$. Without loss of generality, assume that $f(x^*)_i\geq x^*_i$ for both $i\in\set{1,2}$ and we
 consider two cases: (1) $f(x^*)_i\not\in[\alpha_{\min,i},\alpha_{\max,i}]$ for both $i\in\set{1,2}$, and (2) there is exactly one coordinate $i\in\set{1,2}$ such that $f(x^*)_i\not\in[\alpha_{\min,i},\alpha_{\max,i}]$.
\begin{flushleft}\begin{enumerate}
\item For the first case, we know that $f(x^*)>\alpha_{\max,i}$ for both $i\in\set{1,2}$. We show that $\alpha_{\max}$\\ must be a desired $\eps$-fixed point of $f$. On the one hand, since $f(\alpha_{\max})\in\calB(\alpha_{\min},\alpha_{\max},\pm \eps)$, we know that $f(\alpha_{\max})_i-\alpha_{\max,i}\leq \eps$ for both $i\in\set{1,2}$. On the other hand, given that $\norm{x^*-\alpha_{\max}}_{\infty}\leq\eps$, we have  $\norm{f(\alpha_{\max})-f(x^*)}_{\infty}\leq\eps$ and thus, $$f(\alpha_{\max})_i\geq f(x^*)_i-\eps>\alpha_{\max,i}-\eps$$ for both $i\in\set{1,2}$. This shows that $\|f(\alpha_{\max})-\alpha_{\max}\|_\infty\le \eps$.

\item The second case is more involved. Assume without loss of generality that $f(x^*)_1\leq \alpha_{\max,1}$ and $f(x^*)_2> \alpha_{\max,2}$.  We have that $f(x^*)_2>x^*_2+\eps>\alpha_{\max,2}$.  Consider the point $y$ with $y_1=x^*_1$ and $y_2=\alpha_{\max,2}$. So $\norm{y-x^*}_{\infty}=\alpha_{\max,2}-x^*_2<\eps$ and thus, $\|f(y)-f(x^*)\|_\infty<\eps$. From this we have $|f(y)_2-y_2|\leq \eps$:
  $f(y)_2-y_2\le \eps$ follows from $f(y)_2\le \alpha_{\max,2}+\eps=y_2+\eps$;
  $f(y)_2\ge y_2-\eps$ because otherwise, 
  $\|f(y)-f(x^*)\|_\infty\ge |f(y)_2-f(x^*)_2|> \eps$, a contradiction.

Furthermore, we have $f(y)_1> f(x^*)_1-\eps\geq x^*_1-\eps= y_1-\eps$. If $f(y)_1\leq y_1+\eps$, then $y$ is a desired $\eps$-fixed point of $f$ and we are done. Otherwise, we have $f(y)_1>y_1+\eps$ and we show below that $y'$, defined as $y'_2\coloneqq y_2=\alpha_{\max,2}$ and $$y'_1\coloneqq \min(\alpha_{\max,1},y_1+\eps)=
\min(\alpha_{\max,1},x^*_1+\eps),$$ must be a desired $\eps$-fixed point of $f$. 

First note that $\norm{y'-x^*}_{\infty}, \|y'-y\|_\infty\leq\eps$ and thus,
$\|f(y')-f(x^*)\|_\infty, \|f(y')-f(y)\|_\infty\le \eps$.
It follows from a similar argument that $|f(y')_2-y'_2|\leq\eps$. 
We also have 
$$f(y')_1\geq f(y)_1-\eps>y_1\geq y'_1-\eps.$$
Finally we have $f(y')_1\leq y'_1+\eps$ because: if $y'_1=\alpha_{\max,1}$, then $f(y')_1\leq \alpha_{\max,1}+\eps=y'_1+\eps$; if $y'_1<\alpha_{\max,1}$, then $f(y')_1-f(x^*)_1\leq\eps$, which implies that $f(y')_1\leq f(x^*)_1+\eps\leq y'_1+\eps$.  
\end{enumerate}\end{flushleft}
This finishes the proof of the lemma.
\end{proof}

\autoref{theorem: linfty} then follows by combining \Cref{lem:ss02} with the following decomposition theorem:%

\begin{theorem}\label{thm: linfty decomposition}
	Suppose that $\NonExp^\dagger_{\infty}(\eps,a)$ can be solved in $q(\eps,a)$ queries and $t(\eps,a)$ time, and $\NonExp^\dagger_{\infty}(\eps,b)$ can be solved in $q(\eps,b)$ queries and $t(\eps,b)$ time. Then $\NonExp^\dagger_{\infty}(\eps,a+b)$ can be solved in $q(\eps,a)\cdot q(\eps,b)$ queries and $O((q(\eps,b)+t(\eps,a))\cdot t(\eps,b))$ time.
\end{theorem}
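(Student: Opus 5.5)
The plan is to write points of $\calB(\alpha_{\min},\alpha_{\max})\subseteq[0,1]^{a+b}$ as $x=(u,w)$ with $u\in[0,1]^a$ and $w\in[0,1]^b$. Correspondingly, split $\alpha_{\min},\alpha_{\max}$ into $(\alpha^a_{\min},\alpha^b_{\min})$ and $(\alpha^a_{\max},\alpha^b_{\max})$, and write $f=(f_a,f_b)$. We run the $b$-dimensional algorithm $\calA_2$ on the box $\calB(\alpha^b_{\min},\alpha^b_{\max})$ and simulate its oracle for a map $g$. When $\calA_2$ queries $q^t$, we run the $a$-dimensional algorithm $\calA_1$ on the slice map $u\mapsto f_a(u,q^t)$, restricted to a carefully chosen sub-box $\calB(p^t_{\min},p^t_{\max})$. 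This returns a point $u^t$ with $|f_a(u^t,q^t)-u^t|_\infty\le\eps$, and we answer $v^t:=f_b(u^t,q^t)$. The sub-box is defined using the history. Set $d_{s}:=\|q^t-q^s\|_\infty$ for $s<t$, and let
\[
p^t_{\min}:=\max\Big(\alpha^a_{\min},\ \max_{s<t}(u^s-d_s\mathbf 1)\Big),\qquad p^t_{\max}:=\min\Big(\alpha^a_{\max},\ \min_{s<t}(u^s+d_s\mathbf 1)\Big),
\]
with max and min taken coordinatewise. The idea is to force $\|u^t-u^s\|_\infty\le\|q^t-q^s\|_\infty$ for all $s<t$.

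The key steps, in order, are as follows.

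(i) \emph{The sub-box is nonempty.} For each coordinate this is a family of intervals on the line, so by one-dimensional Helly it suffices that the intervals intersect pairwise. Each interval $[u^s_i-d_s,u^s_i+d_s]$ meets the $\alpha^a$ range, since $u^s$ lies in the $a$-box. For two history intervals we need $|u^s_i-u^r_i|\le d_s+d_r$. By induction, $u^s$ was chosen in its own sub-box, so $|u^s-u^r|_\infty\le\|q^s-q^r\|_\infty\le d_s+d_r$ by the triangle inequality.

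(ii) \emph{The slice map is a valid $\NonExp^\dagger_\infty(\eps,a)$ instance on the sub-box.} Nonexpansiveness is inherited from $f$. For the range, take $u$ in the sub-box. Then
\[
|f_a(u,q^t)-f_a(u^s,q^s)|_\infty\le\max(\|u-u^s\|_\infty,d_s)=d_s .
\]
Also $|f_a(u^s,q^s)-u^s|_\infty\le\eps$, so $f_a(u,q^t)\in[u^s-d_s-\eps,\,u^s+d_s+\eps]$ coordinatewise. Together with $f_a\in\calB(\alpha^a_{\min},\alpha^a_{\max},\pm\eps)$, this gives $f_a(u,q^t)\in\calB(p^t_{\min},p^t_{\max},\pm\eps)$. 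Hence $\calA_1$ correctly returns $u^t$ within $q(\eps,a)$ queries and $t(\eps,a)$ time. The last of its queries (or one extra query) gives $f(u^t,q^t)$, and hence $v^t$.

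(iii) \emph{The answers to $\calA_2$ are consistent.} We have $\|v^t-v^s\|_\infty\le\max(\|u^t-u^s\|_\infty,\|q^t-q^s\|_\infty)=\|q^t-q^s\|_\infty$, and $v^t\in\calB(\alpha^b_{\min},\alpha^b_{\max},\pm\eps)$. In the $\ell_\infty$-norm, a finite partial map that is $1$-Lipschitz extends to a nonexpansive map with the same range box. To see this, extend each coordinate by the McShane formula $\min_s(v^s_i+\|w-q^s\|_\infty)$ and clip it to $[\alpha^b_{\min,i}-\eps,\alpha^b_{\max,i}+\eps]$. Clipping preserves the $1$-Lipschitz property and the prescribed values. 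So $\calA_2$ behaves as on a legitimate instance and terminates within $q(\eps,b)$ rounds with some $q^t$ satisfying $\|v^t-q^t\|_\infty\le\eps$.

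(iv) \emph{Output and complexity.} We output $(u^t,q^t)$. Its $a$-part is $\eps$-fixed by the guarantee of $\calA_1$ and its $b$-part by the termination of $\calA_2$, so it is an $\eps$-fixed point of $f$. The number of queries is $q(\eps,a)\cdot q(\eps,b)$. The time per round is $O(q(\eps,b))$ to compute the sub-box from the history (for constant $a$), plus $t(\eps,a)$ for $\calA_1$. Summing over $t(\eps,b)$ steps of $\calA_2$ gives the claimed $O((q(\eps,b)+t(\eps,a))\cdot t(\eps,b))$.

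The main obstacle is step (ii) together with step (i). The choice of sub-box must simultaneously keep it nonempty, keep the slice map within $\pm\eps$ of it (which is exactly why the $\dagger$ version with its $\eps$ slack is needed), and enforce the Lipschitz relation $\|u^t-u^s\|_\infty\le\|q^t-q^s\|_\infty$ that makes step (iii) go through. I would first verify the inductive invariant $\|u^s-u^r\|_\infty\le\|q^s-q^r\|_\infty$ carefully, since everything else rests on it.
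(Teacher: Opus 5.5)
Your proposal is correct and follows essentially the same route as the paper: simulate $\calA_2$ on a $b$-dimensional map built query by query, and answer each $q^t$ by running $\calA_1$ on the slice $u\mapsto f(u,q^t)_{[a]}$. The slice is restricted to the intersection of the original $a$-box with the $\|q^t-q^r\|_\infty$-boxes around earlier solutions, and you then verify the same four facts as the paper: nonemptiness of that box, the $\pm\eps$ range condition, Lipschitz consistency of the answers, and that the combined point is $\eps$-fixed. The only differences are minor: you justify the extension of the partial answers by coordinatewise McShane extension plus clipping, where the paper cites Lemma 11 of \cite{CLY25}, and your range bound in step (ii), $\max(\|u-u^s\|_\infty,d_s)=d_s$, is stated more cleanly than the corresponding step in the paper.
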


One may notice an overhead term $q(\eps, b)$ in the time complexity, due to line~\ref{line: linfty construct} of \autoref{alg: linfty}. This is not an issue, as when we apply the decomposition theorem to obtain \autoref{theorem: linfty}, we can choose $b\leq a$ so that the $q(\eps,b)$ can be absorbed into the $t(\eps,a)$ term.

\begin{algorithm}[!t]
	\caption{Algorithm for $\NonExp^\dagger_{\infty}(\eps,{a+b} )$ via $\NonExp^\dagger_{\infty}(\eps,a)$ and 
	  $\NonExp^\dagger_{\infty}(\eps,b)$}
	\label{alg: linfty}
	\KwIn{An $\eps\in(0,1]$, $\alpha_{\min},\alpha_{\max}\in[0,1]^a$, $\beta_{\min},\beta_{\max}\in[0,1]^b$ such that $\alpha_{\min}\preceq \alpha_{\max}$ and $\beta_{\min}\preceq \beta_{\max}$, and an oracle access to a nonexpansive map
		$f: \calB((\alpha_{\min},\beta_{\min}),(\alpha_{\max},\beta_{\max}))\to \calB((\alpha_{\min},\beta_{\min}),(\alpha_{\max},\beta_{\max}),\pm \eps)$.
	}
	\KwOut{A point $x^*\in \calB((\alpha_{\min},\beta_{\min}),(\alpha_{\max},\beta_{\max}))$ such that $\norm{f(x^*)-x^*}_\infty\leq \eps$.
	}
	
	\BlankLine
	
	Let $\calA_1$ be an algorithm for $\NonExp^\dagger_{\infty}(\eps,a)$ and $\calA_2$ be an algorithm for $\NonExp^\dagger_{\infty}(\eps,b)$.\\

	\For{$t=1,2,\ldots$}{

	For each previous round $r\in[t-1]$, let $q^r\in \calB(\beta_{\min},\beta_{\max})$ be the point queried by $\calA_2$ and $v^r\in \calB(\beta_{\min},\beta_{\max},\pm \eps)$ be the answer.\\
		
		Given the sequence $((q^1,v^1),\cdots,(q^{t-1},v^{t-1}))$, let 
		  $q^t\in \calB(\beta_{\min},\beta_{\max})$ be the $t$-th query of $\calA_2$.\\

		Let (when $t=1$, set $p^1_{\min}\gets \alpha_{\min}$ and $p^1_{\max}\gets \alpha_{\max}$)\label{line: linfty construct}
		\begin{align*}p^t_{\min,i}&\leftarrow \max\left(\alpha_{\min,i},\max_{r\in[t-1]}(y^{(r,*)}_i-\norm{q^t-q^r}_{\infty})\right) \text{ for all } i\in[a]\quad\text{and}\\[0.5ex]
		p^t_{\max,i}&\leftarrow \min\left(\alpha_{\max,i},\min_{r\in[t-1]}(y^{(r,*)}_i+\norm{q^t-q^r}_{\infty})\right) \text{ for all } i\in[a].
		\end{align*}\label{define pr}\\
		
			Define a new function $g: \calB(p^t_{\min},p^t_{\max}) \to \calB(p^t_{\min},p^t_{\max},\pm \eps)$ as follows:
			$$ g(x)=\Big(f(x,q^t)_1,\cdots,f(x,q^t)_a\Big),
			\quad \text{for every $x\in \calB(p^t_{\min},p^t_{\max})$.}$$\\
			Run algorithm $\calA_1$ 
			  to find a solution 
			  $y^{(t,*)}\in\calB(p^t_{\min},p^t_{\max})$ to $\NonExp^\dagger_{\infty}(\eps,a)$
			  on $g$.

		Construct $v^t\in [0,1]^b$ as the query result to $q^t$: $$v^t_{j-a}=f(y^{(t,*)},q^t)_j,\quad\text{for each $j\in[a+b]\setminus [a]$}.$$\\ 
		
		If $\norm{v^t-q^t}_{\infty}\leq \eps$, 
		  \Return $(y^{(t,*)},q^t)$.
	}
	
\end{algorithm}
\begin{proof}[Proof of \Cref{thm: linfty decomposition}]
	Suppose that $\calA_1$ is an algorithm that solves $\NonExp^\dagger_{\infty}(\eps,a)$ in $q(\eps,a)$ queries and $t(\eps,a)$ time, and $\calA_2$ is an algorithm that solves $\NonExp^\dagger_{\infty}(\eps,b)$ in $q(\eps,b)$ queries and $t(\eps,b)$ time. We will show that \autoref{alg: linfty} can solve $\NonExp^\dagger_{\infty}(\eps,a+b)$ in $q(\eps,a)\cdot q(\eps,b)$ queries and $O((q(\eps,b)+t(\eps,a))\cdot t(\eps,b))$ time.\medskip

\def\hhh{h}	
	
\noindent \textbf{Overview.}
At a high level, \autoref{alg: linfty} 
  runs $\calA_2$ for $\NonExp^\dagger_{\infty}(\eps,b)$ on a function $\hhh: \calB(\beta_{\min},\beta_{\max})\to \calB(\beta_{\min},\beta_{\max},\pm \eps)$ built on the go using the input of the nonexpansive $$f: \calB\big((\alpha_{\min},\beta_{\min}),(\alpha_{\max},\beta_{\max})\big)\to \calB\big((\alpha_{\min},\beta_{\min}),(\alpha_{\max},\beta_{\max}),\pm \eps\big).$$
Let $q^1,\ldots,q^{t-1}\in \calB(\beta_{\min},\beta_{\max})$ be the $t-1$ queries that 
  $\calA_2$ has made so far, for some $t\ge 1$, and 
  let $v^1,\ldots,v^{t-1}\in \calB(\beta_{\min},\beta_{\max},\pm \eps)$ be the query results
  on $\hhh$.
Let $q^t\in \calB(\beta_{\min},\beta_{\max})$ be the new query made by $\calA_2$ in the $t$-th round.
We will use $f$ and the past queries and their associated results $((q^1,v^1),\cdots,(q^{t-1},v^{t-1}))$ to come up with a $v^t\in \calB(\beta_{\min},\beta_{\max},\pm \eps)$ 
  as the answer $h(q^t)$ to the query
  such that
\begin{flushleft}\begin{enumerate} \item
\Cref{lemma: linfty consistent}: All results $(q^1,v^1),\ldots,(q^t,v^t)$ 
  are consistent with the nonexpansion property, i.e., $\norm{v^r-v^{r'}}_{\infty}\leq \norm{q^r-q^{r'}}_{\infty}$ for all $r,r'\in[t]$; and 
  \item \Cref{lemma: linfty find a solution}: When $q^t$ is a solution to $\NonExp^\dagger_{\infty}(\eps,b)$ on $h$,
  i.e., $\norm{v^t-q^t}_{\infty}\leq \eps$,
  we obtain a solution to $\NonExp^\dagger_{\infty}(\eps,a+b)$ on 
  the original input function $f$.
\end{enumerate}\end{flushleft}

Note that the pairwise consistency of the answers (\Cref{lemma: linfty find a solution}) implies that there exists a nonexpansive function $h$ such that $h(q^r) = v^r$ for all $r \in [t]$, by Lemma 11 of \cite{CLY25}.
To obtain the answer $v^t$ to the query $q^t$, intuitively, we fix the last $b$ coordinates to be $q^t$ and find an $\eps$-fixed point of $f$ only for the first $a$ coordinates. We achieve so by running $\calA_1$ on a function $g:\calB(p^t_{\min},p^t_{\max})\to \calB(p^t_{\min},p^t_{\max},\pm \eps)$, where $\alpha_{\min}\preceq p^t_{\min}\preceq p^t_{\max}\preceq \alpha_{\max}$. The function $g$ is defined as the projection of $f$ onto the slice where the last $b$ coordinates are fixed to $q^t$.
The most crucial component in this procedure
  is to initialize the search space $\calB(p^t_{\min},p^t_{\max})$, rather than using the full domain $\calB(\alpha_{\min},\alpha_{\max})$, using 
  pairs $(q^r,v^r)$, $r\in [t-1]$, from previous rounds.\medskip

\noindent	\textbf{Correctness.} 
	We prove a sequence of lemmas about \autoref{alg: linfty}:

\begin{lemma}\label{hehehehe}
In each round $t$, 
we have 
\begin{enumerate}
	\item $\alpha_{\min}\preceq p^t_{\min}\preceq p^t_{\max}\preceq \alpha_{\max}$;
	\item $g$ is a nonexpansive map with respect to the $\ell_{\infty}$-norm; and
	\item $g(x)\in \calB(p^t_{\min},p^t_{\max},\pm \eps)$ for all $x\in \calB(p^t_{\min},p^t_{\max})$;
	\item $\norm{f(y^{(t,*)},q^t)_{[a]}-y^{(t,*)}}_{\infty}\leq \eps$.
\end{enumerate}
\end{lemma}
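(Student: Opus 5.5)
The plan is to prove all four items simultaneously by induction on $t$, and to carry along one auxiliary invariant:
\[
(\star)\qquad \norm{y^{(t,*)}-y^{(r,*)}}_\infty\le \norm{q^t-q^r}_\infty\quad\text{for all } r<t,\qquad y^{(t,*)}\in\calB(\alpha_{\min},\alpha_{\max}).
\]
For $t=1$ the box is $\calB(\alpha_{\min},\alpha_{\max})$ and items 1--3 are immediate from the hypothesis on $f$. Item 4 and $(\star)$ then follow from the correctness of $\calA_1$.

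For the inductive step we fix round $t$ and assume that $(\star)$ and items 1--4 hold for all rounds $r<t$.

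\textbf{Item 1.} The inequalities $\alpha_{\min}\preceq p^t_{\min}$ and $p^t_{\max}\preceq\alpha_{\max}$ hold by the definition of $p^t_{\min},p^t_{\max}$. To get $p^t_{\min,i}\le p^t_{\max,i}$, it suffices to compare every lower term with every upper term.
\begin{itemize}
\item $\alpha_{\min,i}\le\alpha_{\max,i}$ holds by assumption.
\item $y^{(r,*)}_i-\norm{q^t-q^r}_\infty\le \alpha_{\max,i}$ and $\alpha_{\min,i}\le y^{(r',*)}_i+\norm{q^t-q^{r'}}_\infty$ both hold because $y^{(r,*)},y^{(r',*)}\in\calB(\alpha_{\min},\alpha_{\max})$.
\item $y^{(r,*)}_i-\norm{q^t-q^r}_\infty\le y^{(r',*)}_i+\norm{q^t-q^{r'}}_\infty$ follows from $|y^{(r,*)}_i-y^{(r',*)}_i|\le\norm{q^r-q^{r'}}_\infty$, which is $(\star)$ for earlier rounds, together with the triangle inequality $\norm{q^r-q^{r'}}_\infty\le \norm{q^t-q^r}_\infty+\norm{q^t-q^{r'}}_\infty$.
\end{itemize}

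\textbf{Item 2.} For $x,x'$ in the box,
\[
\norm{g(x)-g(x')}_\infty\le\norm{f(x,q^t)-f(x',q^t)}_\infty\le\norm{(x,q^t)-(x',q^t)}_\infty=\norm{x-x'}_\infty .
\]

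\textbf{Item 3.} This is the main obstacle, and it is where the specific choice of box is used. Take $x\in\calB(p^t_{\min},p^t_{\max})$ and a coordinate $i\in[a]$; we show $g(x)_i\le p^t_{\max,i}+\eps$, and the lower bound is symmetric.
\begin{itemize}
\item If $p^t_{\max,i}=\alpha_{\max,i}$, the bound follows directly from the range of $f$.
\item Otherwise $p^t_{\max,i}=y^{(r,*)}_i+\norm{q^t-q^r}_\infty$ for some $r<t$.
\end{itemize}
In the second case we first observe that every coordinate $j\in[a]$ of $x$ lies in $[p^t_{\min,j},p^t_{\max,j}]\subseteq[y^{(r,*)}_j-\norm{q^t-q^r}_\infty,\ y^{(r,*)}_j+\norm{q^t-q^r}_\infty]$. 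Hence $\norm{x-y^{(r,*)}}_\infty\le\norm{q^t-q^r}_\infty$, and therefore $\norm{(x,q^t)-(y^{(r,*)},q^r)}_\infty\le\norm{q^t-q^r}_\infty$. Nonexpansiveness of $f$ and item 4 for round $r$ then give
\[
f(x,q^t)_i\le f(y^{(r,*)},q^r)_i+\norm{q^t-q^r}_\infty\le y^{(r,*)}_i+\eps+\norm{q^t-q^r}_\infty=p^t_{\max,i}+\eps .
\]

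\textbf{Item 4 and $(\star)$.} By items 1--3, $g$ is a valid instance of $\NonExp^\dagger_\infty(\eps,a)$ on $\calB(p^t_{\min},p^t_{\max})$. So $\calA_1$ returns $y^{(t,*)}$ in this box with $\norm{g(y^{(t,*)})-y^{(t,*)}}_\infty\le\eps$, which is exactly item 4.

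Membership of $y^{(t,*)}$ in $\calB(p^t_{\min},p^t_{\max})\subseteq\calB(\alpha_{\min},\alpha_{\max})$ gives the second part of $(\star)$. For each $r<t$ it also gives $y^{(r,*)}_i-\norm{q^t-q^r}_\infty\le y^{(t,*)}_i\le y^{(r,*)}_i+\norm{q^t-q^r}_\infty$ for every $i\in[a]$, which is the first part of $(\star)$. This closes the induction.
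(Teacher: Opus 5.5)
Your proof is correct and follows the paper's argument: induction on $t$, with item~1 reduced to the pairwise bound $\|y^{(r,*)}-y^{(r',*)}\|_\infty\le\|q^{r}-q^{r'}\|_\infty$ (obtained from the later point lying in its own box), item~3 from nonexpansiveness of $f$ against $(y^{(r,*)},q^r)$ plus item~4 for round $r$, and item~4 from the correctness of $\mathcal{A}_1$. The differences are only presentational and do not change the argument: you carry that pairwise bound as an explicit invariant $(\star)$, you also verify the $\alpha$-versus-$y$ comparisons that the paper's ``it suffices'' passes over, and you argue item~3 coordinate-wise via the active constraint rather than through the intersection-of-boxes identities.
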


\begin{proof} 
We prove the lemma by induction on $t$. The base case when $t=1$ is trivial given that $p_{\min}^1 $ $=\alpha_{\min}$ and $p_{\max}^1=\alpha_{\max}$. For the induction step, we prove that the four items above hold for $t>1$. We assume by the inductive hypothesis that for any $r\in[t-1]$, we have (1) $\alpha_{\min}\preceq p^r_{\min}\preceq p^r_{\max}\preceq \alpha_{\max}$ and (4) $\norm{f(y^{(r,*)},q^r)_{[a]}-y^{(r,*)}}_{\infty}\leq \eps$. We show each item in order.\medskip

\noindent\textbf{First item.} $\alpha_{\min}\preceq p_{\min}^t$ and $p_{\max}^t\preceq \alpha_{\max}$ follow directly from the definition. To show $p_{\min}^t\preceq p_{\max}^t$, it suffices to show that
\[y^{(r,*)}_i-\norm{q^t-q^r}_{\infty}\leq y^{(r',*)}_i+\norm{q^t-q^{r'}}_{\infty}\] for all $i\in[a]$ and $r,r'\in[t-1]$. Equivalently, we want to show 
\[\norm{y^{(r,*)}-y^{(r',*)}}_{\infty}\leq \norm{q^t-q^{r'}}_{\infty}+\norm{q^t-q^r}_{\infty}\] for all $r<r'\in[t-1]$. Fix arbitrary $r<r'\in[t-1]$. By the inductive hypothesis we have that $p_{\min}^r\preceq p_{\max}^r$. Then we know $y^{(r',*)}\in \calB(p_{\min}^r,p_{\max}^r)$. This implies that \[y^{(r,*)}_i-\norm{q^{r'}-q^r}_{\infty}\leq y^{(r',*)}_i\leq y^{(r,*)}_i+\norm{q^{r'}-q^r}_{\infty}\] for all $i\in[a]$. So we know that $\norm{y^{(r,*)}-y^{(r',*)}}_{\infty}\leq \norm{q^{r'}-q^r}_{\infty}\leq \norm{q^{t}-q^r}_{\infty}+\norm{q^{t}-q^{r'}}_{\infty}.$

\vspace{0.2cm}
\noindent\textbf{Second item.} For the second item, notice that 
$$
\norm{g(x)-g(z)}_{\infty}=\norm{f(x,q^t)-f(z,q^t)}_{\infty}\leq \norm{x-z}_{\infty}.
$$
where the last inequality is because $f$ is a nonexpansive function.

\vspace{0.2cm}
\noindent\textbf{Third item.} For the third item, we first note an equivalent definiton of $\calB(p^t_{\min},p^t_{\max})$, i.e.,  
\begin{equation}\label{equation: Bpminpmax}
\calB(p^t_{\min},p^t_{\max})=\left(\bigcap_{r\in[t-1]}\calB(y^{(r,*)},y^{(r,*)},\pm \norm{q^t-q^r}_{\infty})\right)\cap \calB(\alpha_{\min},\alpha_{\max}).
\end{equation}

Similarly, an equivalent definiton of $\calB(p^t_{\min},p^t_{\max},\pm \eps)$ is as follows:
\begin{equation}\label{equation: Bpminpmax eps}
\calB(p^t_{\min},p^t_{\max},\pm \eps)=\left(\bigcap_{r\in[t-1]}\calB(y^{(r,*)},y^{(r,*)},\pm (\norm{q^t-q^r}_{\infty}+\eps))\right)\cap \calB(\alpha_{\min},\alpha_{\max},\pm \eps).
\end{equation}
Recall that $f: \calB((\alpha_{\min},\beta_{\min}),(\alpha_{\max},\beta_{\max}))\to \calB((\alpha_{\min},\beta_{\min}),(\alpha_{\max},\beta_{\max}),\pm \eps)$, so we have $g(x)\in \calB(\alpha_{\min},\alpha_{\max},\pm\eps)$.

Now fix any $r\in[t-1]$ and any $x\in \calB(y^{(r,*)},y^{(r,*)},\pm \norm{q^t-q^r}_{\infty})$. We will show that 
$$
g(x)\in \calB(y^{(r,*)},y^{(r,*)},\pm (\norm{q^t-q^r}_{\infty}+\eps)).
$$

First, note that
$$
\norm{g(x)-y^{(r,*)}}_{\infty}=\norm{f(x,q^t)_{[a]}-y^{(r,*)}}_{\infty}\leq \norm{f(x,q^t)_{[a]}-f(y^{(r,*)},q^r)_{[a]}}_{\infty}+\norm{f(y^{(r,*)},q^r)_{[a]}-y^{(r,*)}}_{\infty}.
$$

Since $f$ is nonexpansive, we know that $\norm{f(x,q^t)_{[a]}-f(y^{(r,*)},q^r)_{[a]}}_{\infty}\leq \norm{(x,q^t)-(y^{(r,*)},q^r)}_{\infty}$.
Since $x\in\calB(y^{(r,*)},y^{(r,*)},\pm \norm{q^t-q^r}_{\infty})$, we know that $\norm{(x,q^t)-(y^{(r,*)},q^r)}_{\infty}\leq \norm{x-y^{(r,*)}}_{\infty}$. Moreover, since $r<t$, we have $\norm{f(y^{(r,*)},q^r)_{[a]}-y^{(r,*)}}_{\infty}\leq \eps$ by the inductive hypothesis and thus,
$$
\norm{g(x)-y^{(r,*)}}_{\infty}\leq \norm{x-y^{(r,*)}}_{\infty}+\eps, \quad\text{which implies}\quad g(x)\in \calB(y^{(r,*)},y^{(r,*)},\pm (\norm{q^t-q^r}_{\infty}+\eps)).
$$

This finishes the proof of the third item, according to \Cref{equation: Bpminpmax} and \Cref{equation: Bpminpmax eps}.

\vspace{0.2cm}
\noindent\textbf{Fourth item.} Given that $g(x):\calB(p^t_{\min},p^t_{\max})\to \calB(p^t_{\min},p^t_{\max},\pm \eps)$, we know that the point $y^{(t,*)}$ that the algorithm $\calA_1$ outputs is an $\eps$-fixed point. Thus $\norm{g(y^{(t,*)})-y^{(t,*)}}_{\infty}\leq \eps$, which means $\norm{f(y^{(t,*)},q^t)_{[a]}-y^{(t,*)}}_{\infty}\leq \eps$.
\end{proof}

We are now ready to prove the two lemmas needed for the correctness of \autoref{alg: linfty}:

\begin{lemma}\label{lemma: linfty consistent}
	For every round $t\geq 1$ and every $r <t$, we have  $\norm{v^r-v^t}_{\infty}\leq \norm{q^r-q^t}_{\infty}$.
\end{lemma}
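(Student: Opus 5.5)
The plan is to compare the two points $(y^{(r,*)},q^r)$ and $(y^{(t,*)},q^t)$ in the full $(a+b)$-dimensional domain and apply nonexpansiveness of $f$ to them. By construction,
\[
v^r=f(y^{(r,*)},q^r)_{[a+b]\setminus[a]}\qquad\text{and}\qquad v^t=f(y^{(t,*)},q^t)_{[a+b]\setminus[a]}.
\]
Since the $\ell_\infty$-norm of a sub-vector is at most the $\ell_\infty$-norm of the whole vector, nonexpansiveness of $f$ gives
\[
\norm{v^r-v^t}_\infty\le \norm{f(y^{(r,*)},q^r)-f(y^{(t,*)},q^t)}_\infty\le \max\Big(\norm{y^{(r,*)}-y^{(t,*)}}_\infty,\ \norm{q^r-q^t}_\infty\Big).
\]
It therefore suffices to show $\norm{y^{(r,*)}-y^{(t,*)}}_\infty\le\norm{q^r-q^t}_\infty$ for every $r<t$.

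This remaining inequality comes from the choice of the box in round $t$. The solution $y^{(t,*)}$ returned by $\calA_1$ lies in $\calB(p^t_{\min},p^t_{\max})$. By \Cref{equation: Bpminpmax}, this box is contained in
\[
\calB\big(y^{(r,*)},y^{(r,*)},\pm\norm{q^t-q^r}_\infty\big)\quad\text{for every } r\in[t-1].
\]
Coordinatewise this means $|y^{(t,*)}_i-y^{(r,*)}_i|\le\norm{q^t-q^r}_\infty$ for all $i\in[a]$, which is exactly the inequality we need. The same containment was already used in the proof of the first item of \Cref{hehehehe}.

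The only point needing care is that $y^{(t,*)}$ really lies in this box, i.e., that the box is nonempty and $\calA_1$ is run on a valid instance. This is supplied by \Cref{hehehehe}: its first item gives $p^t_{\min}\preceq p^t_{\max}$, and its second and third items show that $g$ is a valid $\NonExp^\dagger_\infty(\eps,a)$ instance on that box, so $\calA_1$ returns a point inside it. With this, the argument is a direct chain of the two inequalities above and has no further obstacle.
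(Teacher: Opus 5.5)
Your proposal is correct and follows essentially the same route as the paper. You apply nonexpansiveness of $f$ to the two points $(y^{(r,*)},q^r)$ and $(y^{(t,*)},q^t)$, then bound $\|y^{(t,*)}-y^{(r,*)}\|_\infty\le\|q^t-q^r\|_\infty$ using $y^{(t,*)}\in\calB(p^t_{\min},p^t_{\max})\subseteq\calB(y^{(r,*)},y^{(r,*)},\pm\|q^t-q^r\|_\infty)$. Your remark that \Cref{hehehehe} is what guarantees $\calA_1$ is run on a valid instance, and hence returns a point in that box, spells out a step the paper leaves implicit.
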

\begin{proof}
Since $v^t$ is defined by $f$ so clearly $v^t\in \calB(\beta_{\min},\beta_{\max},\pm \eps)$. Notice that
$$
\norm{v^t-v^r}_{\infty}=\max_{j\in[a+b]\setminus[a]} |f(y^{(t,*)},q^t)_j-f(y^{(r,*)},q^r)_j|\leq \norm{(y^{(t,*)},q^t) - (y^{(r,*)},q^r)}_{\infty}\leq \norm{q^t-q^r}_{\infty},
$$
where the last inequality follows from $y^{(t,*)}\in\calB(y^{(r,*)},y^{(r,*)},\pm \norm{q^t-q^r}_{\infty})$.
\end{proof}

\begin{lemma}\label{lemma: linfty find a solution}
	For every round $t$, if $\norm{v^t-q^t}_{\infty}\leq \eps$, then $\norm{f(y^{(t,*)},q^t)-(y^{(t,*)},q^t)}_{\infty}\leq \eps$.
\end{lemma}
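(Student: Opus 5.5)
The plan is to split the $\ell_\infty$-norm of the $(a+b)$-dimensional displacement into its first $a$ coordinates and its last $b$ coordinates, and bound each block separately by $\eps$. Since the $\ell_\infty$-norm of a concatenated vector is the maximum of the $\ell_\infty$-norms of its blocks, we have
\[
\norm{f(y^{(t,*)},q^t)-(y^{(t,*)},q^t)}_{\infty}=\max\Big(\norm{f(y^{(t,*)},q^t)_{[a]}-y^{(t,*)}}_{\infty},\ \max_{j\in[a+b]\setminus[a]}\big|f(y^{(t,*)},q^t)_j-q^t_{j-a}\big|\Big).
\]
It therefore suffices to show that each of the two terms on the right is at most $\eps$.

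For the first block, I will invoke the fourth item of \Cref{hehehehe}. It states precisely that $\norm{f(y^{(t,*)},q^t)_{[a]}-y^{(t,*)}}_{\infty}\leq \eps$. This holds because $\calA_1$ is run on the function $g$, which by the second and third items of the same lemma is a valid $\NonExp^\dagger_\infty(\eps,a)$ instance on $\calB(p^t_{\min},p^t_{\max})$, so its output $y^{(t,*)}$ is an $\eps$-fixed point of $g$.

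For the second block, I will unfold the definition of $v^t$ in \autoref{alg: linfty}: $v^t_{j-a}=f(y^{(t,*)},q^t)_j$ for $j\in[a+b]\setminus[a]$. The second term is then exactly $\norm{v^t-q^t}_{\infty}$, which is at most $\eps$ by the hypothesis of the lemma.

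Combining the two bounds gives the claim. There is no genuine obstacle here, since all the work was done in \Cref{hehehehe}. The only point needing care is to check that the fourth item of \Cref{hehehehe} is available in round $t$ itself, and not only for earlier rounds. That lemma is proved by induction for every round $t$, so it applies.
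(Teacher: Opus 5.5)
Your proposal is correct and follows the paper's proof. Like the paper, you split the $\ell_\infty$-norm into the first $a$ coordinates and the last $b$ coordinates. The first block is at most $\eps$ by the correctness of $\calA_1$ on $g$, which is the fourth item of the preceding lemma. For the last $b$ coordinates, the definition of $v^t$ makes the block equal to $\norm{v^t-q^t}_{\infty}$, which is at most $\eps$ by hypothesis.
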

\begin{proof}
	By the correctness of the algorithm $\calA_1$, we know that 
	$$
	\max_{j\in[a]} |f(y^{(t,*)},q^t)_j-(y^{(t,*)},q^t)_j|\leq \eps.
	$$
	If $\norm{v^t-q^t}_{\infty}\leq \eps$, then we know that 
	$$
	\max_{j\in[a+b]\setminus[a]} |f(y^{(t,*)},q^t)_j-(y^{(t,*)},q^t)_j|\leq \eps.
	$$
	Then the lemma follows.
\end{proof}

\noindent\textbf{Query complexity and time complexity.} For each round of \autoref{alg: linfty},  the algorithm $\calA_1$ is called once, which uses $q(\eps,a)$ queries and $t(\eps,a)$ time. The outer algorithm $\calA_2$ has no more than $q(\eps,b)$ rounds,
	which means the query complexity of \autoref{alg: linfty} is at most $q(\eps,a) \cdot q(\eps,b)$. In each round, line~\ref{line: linfty construct} takes $O(q(\eps,b))$ time, so each round of \autoref{alg: linfty} needs $O(q(\eps,b)+t(\eps,a))$ time. Overall the time complexity is $O((q(\eps,b)+t(\eps,a))\cdot t(\eps,b))$.
\end{proof}

\section{Algorithm for $\NonExp_1(\eps,k)$}
In this section we give an algorithm, \autoref{algorithm: time}, for $\Contraction_1(\eps,\gamma,k)$ with time complexity $O(k\log^{\lceil k/2\rceil}(1/\eps\gamma))$. \autoref{theorem: l1}  then follows from Observation~\ref{observation: l1 nonexp to contraction} below:
\begin{observation}\label{observation: l1 nonexp to contraction}
Let $f:[0,1]^k\mapsto [0,1]^k$ be a nonexpansive map under the $\ell_1$-norm.
Consider the map $g:[0,1]^k\mapsto[0,1]^k$ defined as $g(x):=(1-\eps/(2k))f(x)$. Clearly $g$ is a $(1-\eps/(2k))$-contraction with respect to the $\ell_1$-norm. Let $x\in [0,1]^k$ be any point with $\norm{g(x)-x}_1\leq \eps/2$. We have 
$$
\eps/2\ge \norm{g(x)-x}_1 =\norm{(1-\eps/(2k))f(x)-x}_1\ge \norm{f(x)-x}_1-(\eps/2k)\cdot\norm{f(x)}_1\geq \norm{f(x)-x}_1-\eps/2.
$$
This gives a black-box reduction 
  from $\NonExp_1(\eps,k)$ to $\Contraction_1(\eps/2,\eps/(2k),k)$, which is both query-efficient and time-efficient.
\end{observation}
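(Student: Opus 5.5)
The observation has two claims: $g$ is a self-map of $[0,1]^k$ that is a $(1-\eps/(2k))$-contraction under $\ell_1$, and any $\eps/2$-fixed point of $g$ is an $\eps$-fixed point of $f$. Together these give the black-box reduction. I will check the claims in that order and then say why the reduction costs nothing extra.

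\textbf{Step 1: $g$ is a well-defined contraction.} Since $f(x)\in[0,1]^k$ and $0<1-\eps/(2k)\le 1$, each coordinate $(1-\eps/(2k))f(x)_i$ lies in $[0,1]$. So $g$ maps $[0,1]^k$ into itself. For any $x,y$,
\[
\norm{g(x)-g(y)}_1=(1-\eps/(2k))\norm{f(x)-f(y)}_1\le (1-\eps/(2k))\norm{x-y}_1 .
\]
Here the equality is homogeneity of the norm and the inequality is nonexpansiveness of $f$. Hence $g$ is a valid instance of $\Contraction_1(\eps/2,\eps/(2k),k)$ with $\gamma=\eps/(2k)\in(0,1)$.

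\textbf{Step 2: solutions transfer back.} Write $(1-\eps/(2k))f(x)-x=(f(x)-x)-(\eps/(2k))f(x)$. The reverse triangle inequality then gives
\[
\norm{g(x)-x}_1\ge \norm{f(x)-x}_1-(\eps/(2k))\norm{f(x)}_1 .
\]
Since every coordinate of $f(x)$ is in $[0,1]$, we have $\norm{f(x)}_1\le k$. The subtracted term is therefore at most $\eps/2$. Combined with $\norm{g(x)-x}_1\le\eps/2$, this yields $\norm{f(x)-x}_1\le\eps$.

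\textbf{Step 3: efficiency.} Each oracle query to $g$ is answered by one query to $f$ followed by $O(k)$ arithmetic operations. So the query count is unchanged and the running time grows by at most a factor of $O(k)$, which is a constant for constant $k$. In particular, an $O(k\log^{\lceil k/2\rceil}(1/\eps\gamma))$-time algorithm for $\Contraction_1$ at $\gamma=\eps/(2k)$ gives $O(\log^{\lceil k/2\rceil}(1/\eps))$ time, because $\log(1/(\eps\cdot\eps/(2k)))=O(\log(1/\eps))$ for constant $k$.

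No step is a real obstacle. The only point that needs care is the bound $\norm{f(x)}_1\le k$, which uses the fact that the codomain is $[0,1]^k$. This fact is also the reason the contraction factor is scaled by $k$.
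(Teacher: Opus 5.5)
Your proof is correct and matches the paper's argument. You show $g$ is a contraction by homogeneity plus nonexpansiveness of $f$, then use the reverse triangle inequality with $\norm{f(x)}_1\le k$ to turn an $\eps/2$-fixed point of $g$ into an $\eps$-fixed point of $f$. One trivial slip in Step 3: the target precision is $\eps/2$, so the relevant product is $\eps^2/(4k)$ rather than $\eps^2/(2k)$, which does not affect the $O(\log(1/\eps))$ bound.
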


Without loss of generality, we assume that $k$ is even; if not, we can easily pad with one extra dimension, setting $f(x)_{k+1}=0$ for all $x\in[0,1]^{k+1}$. We start with an overview of \autoref{algorithm: time}.

\begin{algorithm}[t!]
\caption{Recursive Algorithm for $\Contraction_1(\eps,\gamma,k)$}
\label{algorithm: time}
\KwIn{An even number $k$ and oracle access to a $(1-\gamma)$-contraction $f:[0,1]^k\mapsto[0,1]^k$ with respect to $\ell_1$-norm.}
\KwOut{A point $x\in[0,1]^k$ such that $\norm{f(x)-x}_1\leq \eps$.}

	\BlankLine
    Let $p_{\min}=(0,0)$ and $p_{\max}=(1,1)$.
    
    \For{$t=1,2,\ldots$}{
		Let $p_{\midd}=(p_{\min}+p_{\max})/2$.

		\If{$k>2$}{
		Define a new function $\phi:[0,1]^{k-2}\mapsto[0,1]^{k-2}$ as follows:
		$$\phi(y)\coloneqq \big(f(p_{\midd},y)_3,\dots,f(p_{\midd},y)_k\big).$$
		
		Recursively find a point $y^*\in [0,1]^{k-2}$ such that $\norm{\phi(y^*)-y^*}_1\leq \eps\gamma/4$.
		}
		
		Let $x=(p_{\midd},y^*)$ and \textbf{query} $f(x)$.\tcp{When $k=2$, $y^*$ is defined as empty.}
		
		\If{$\norm{f(x)-x}_1\leq \eps$}{
			\textbf{return} $x$ as an $\eps$-fixed point of $f$. \label{returnnnnnnn}%
			}
		
		Let $i\in\set{1,2}$ be such that $|f(x)_i-x_i|\geq |f(x)_{3-i}-x_{3-i}|$ (breaking ties arbitrarily). 
		
		\tcp{Find a (weakly) dominating coordinate.}
		\If{$f(x)_i>x_i$}{
				Let $p_{\min,i}=x_i+\eps\gamma/8$.
			}
		\Else{
				 Let $p_{\max,i}=x_i-\eps\gamma/8$.
			}
		}
\end{algorithm}

    \vspace{0.2cm}
    \noindent\textbf{Overview of \autoref{algorithm: time}.} Let $x^*=\Fix(f)$ be the unknown fixed point of $f$. The algorithm maintains two variables $p_{\min},p_{\max}\in [0,1]^2$ for the first two coordinates, such that during its execution, it always holds that $p_{\min,i}\leq x^*_i \leq p_{\max,i}$ for both $i\in\set{1,2}$. In other words, the projection of $x^*$ onto the first two coordinates lies within the rectangle defined by $p_{\min}$ and $p_{\max}$. Initially $p_{\min}$ is set to be $(0,0)$ and $p_{\max}$ is set to be $(1,1)$ so this holds trivially. 
    
    For each round, the algorithm fixes the first two coordinates as $p_{\midd}=(p_{\min}+p_{\max})/2$, and recursively finds an approximate fixed point $y^*$ for the remaining $k-2$ coordinates, with an improved approximation guarantee of $\eps\gamma/4$ instead of $\eps$. We then let $x=(p_{\midd},y^*)$.

    The algorithm then proceeds by querying $f(x)$. If $x$ turns out to be an $\eps$-fixed point then we are done. Otherwise, we know that
    $$\norm{f(x)-x}_1>\eps\quad\text{and}\quad\sum_{i=3}^k |f(x)_i-x_i|\leq \eps\gamma/4.$$ Thus, intuitively, most of the discrepancy between $f(x)$ and $x$ arises from the first two coordinates. We then select $i\in\set{1,2}$ such that $|f(x)_i-x_i|\geq |f(x)_{3-i}-x_{3-i}|$. Then we know that $$|f(x)_i-x_i|\geq \sum_{j\neq i}|f(x)_j-x_j|-\eps\gamma/4,$$ which means the discrepancy on $i$ ``almost'' dominates the discrepancy on all other coordinates. This will allow us to shrink $p_{\max,i}-p_{\min,i}$ by half (in fact, slightly more than half).

    We now formalize this with the following technical lemma.
    \begin{lemma}\label{lemma: timexxx}
        Let $k\geq 2$. Let $f:[0,1]^k\mapsto [0,1]^k$ be a $(1-\gamma)$-contraction map, and let $x^*=\emph{\Fix}(f)$. If a point $x\in[0,1]^k$ satisfies $\norm{f(x)-x}_1>\eps$ but $\sum_{j=3}^k|f(x)_j-x_j|\leq \eps\gamma/4$, then we have 
        \begin{itemize}
            \item $x_i^*>x_i+\eps\gamma/8$ if $f(x)_i>x_i$;
            \item $x_i^*<x_i-\eps\gamma/8$ if $f(x)_i<x_i$,
        \end{itemize}
        where $i\in\set{1,2}$ is such that $|f(x)_i-x_i|\geq |f(x)_{3-i}-x_{3-i}|$.
    \end{lemma}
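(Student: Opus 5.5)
The plan is to argue by contradiction using only the contraction inequality $\norm{f(x)-x^*}_1\le(1-\gamma)\norm{x-x^*}_1$, which holds because $f(x^*)=x^*$. By the symmetry $z\mapsto \mathbf{1}-z$ (which preserves $\ell_1$-contractions of $[0,1]^k$ and swaps the two bullets), it suffices to treat the case $f(x)_i>x_i$.

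\textbf{Setup.} Write $d=f(x)-x$, $D=\norm{d}_1>\eps$, $s=d_i>0$ and $\delta=\eps\gamma/8$. The choice of $i$ gives $s\ge |d_{3-i}|$. The hypothesis gives $\sum_{j\ge 3}|d_j|\le \eps\gamma/4$. Together these yield the weak domination bound
\[
2s\ \ge\ D-\eps\gamma/4 ,
\]
which says that $i$ carries at least half the discrepancy up to an additive $\eps\gamma/4$.

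\textbf{Step 1 (coordinate $i$).} Suppose for contradiction that $x^*_i\le x_i+\delta$. I claim that in every case
\[
|f(x)_i-x^*_i|\ \ge\ |x_i-x^*_i|+s-2\delta .
\]
If $x^*_i\le x_i$, then $f(x)_i-x^*_i=|x_i-x^*_i|+s$, so the bound holds with room to spare. If $x_i<x^*_i\le x_i+\delta$, then $f(x)_i-x^*_i\ge s-\delta$, while $|x_i-x^*_i|\le\delta$, which again gives the bound.

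\textbf{Step 2 (other coordinates and summation).} For every $j\ne i$, the reverse triangle inequality gives $|f(x)_j-x^*_j|\ge |x_j-x^*_j|-|d_j|$. Summing over all coordinates yields
\[
\norm{f(x)-x^*}_1\ \ge\ \norm{x-x^*}_1+s-2\delta-(D-s)\ \ge\ \norm{x-x^*}_1-\eps\gamma/4-\eps\gamma/4 .
\]
Combining this with the contraction inequality gives $\gamma\norm{x-x^*}_1\le \eps\gamma/2$, that is, $\norm{x-x^*}_1\le\eps/2$.

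\textbf{Step 3 (contradiction).} By the triangle inequality and contraction,
\[
\norm{f(x)-x}_1\ \le\ \norm{f(x)-x^*}_1+\norm{x^*-x}_1\ \le\ (2-\gamma)\norm{x-x^*}_1\ \le\ (2-\gamma)\eps/2\ <\ \eps .
\]
This contradicts $D>\eps$. Hence $x^*_i>x_i+\eps\gamma/8$.

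The main obstacle is Step 1: the bound must hold uniformly, including the borderline case where $x^*_i$ lies just above $x_i$. Paying $2\delta$ there is exactly what the constant $\eps\gamma/8$ is tuned for, so that the total loss is $\eps\gamma/2$ and the final factor $(2-\gamma)/2$ stays below $1$.
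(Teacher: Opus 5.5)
Your proof is correct and follows essentially the same route as the paper: assume $x^*_i\le x_i+\eps\gamma/8$, show $\norm{x-x^*}_1-\norm{f(x)-x^*}_1\le \eps\gamma/2$ (losing $\eps\gamma/4$ on coordinate $i$ and $\eps\gamma/4$ on the remaining coordinates), and combine this with the contraction inequality to force $\norm{x-x^*}_1\le\eps/2$, which contradicts $\norm{f(x)-x}_1>\eps$. The differences are only cosmetic. You merge coordinate $3-i$ with the tail via $2s\ge D-\eps\gamma/4$ instead of using the paper's $A/B$ split, and you use $|f(x)_i-x^*_i|\ge f(x)_i-x^*_i$ directly rather than first proving $x^*_i<f(x)_i$.
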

    \begin{proof}
        Without loss of generality, assume that $i=1$, namely, $|f(x)_1-x_1|\geq |f(x)_2-x_2|$.  To further simplify notations, we also assume $f(x)_1>x_1$, and $f(x)_2\geq x_2$. Then our goal is to show $x^*_1>x_1+\eps\gamma/8$, and the lemma follows from symmetry. 

        Given that $f$ is a $(1-\gamma)$-contraction and $x^*$ is the fixed point of $f$, we have that 
        \begin{equation}\label{eqhhhhh}
            \norm{f(x)-x^*}_1=\norm{f(x)-f(x^*)}_1\leq (1-\gamma)\cdot \norm{x-x^*}_1.
        \end{equation}
        Furthermore, we note that $\norm{x-x^*}_1>\eps/2$, as otherwise, we have 
        $$\norm{f(x)-x}_1\leq \norm{f(x)-x^*}_1+\norm{x^*-x}_1=\norm{f(x)-f(x^*)}_1+\norm{x^*-x}_1\leq 2\norm{x^*-x}_1\leq \eps.$$
        
        \paragraph{Proof by contradiction.} For the sake of contradictions, we assume that $x^*_1\leq x_1+\eps\gamma/8$. From this, we will give a non-trivial upper bound on $\norm{x-x^*}_1-\norm{f(x)-x^*}_1$ below: \begin{equation}\label{hehe1}
        \norm{x-x^*}_1-\norm{f(x)-x^*}_1\leq \eps\gamma/2.
        \end{equation} 
However, from (\ref{eqhhhhh}), we know 
        $$\gamma\norm{x-x^*}_1\leq \norm{x-x^*}_1-\norm{f(x)-x^*}_1\leq \eps\gamma/2,$$
        which implies $\norm{x-x^*}_1\leq \eps/2$, leading to a contradiction.

        To prove \Cref{hehe1}, we partition $\norm{x-x^*}_1-\norm{f(x)-x^*}_1$ into two parts: $$A=\sum_{j=1,2}\left(|x_j-x^*_j|-|f(x)_j-x^*_j|\right)\quad\text{and}\quad B=\sum_{j=3}^k\left(|x_j-x^*_j|-|f(x)_j-x^*_j|\right).$$
        First, it is clear that by triangle inequality and assumption of the lemma, $$B\leq \sum_{j=3}^k|x_j-f(x)_j|\leq \eps\gamma/4.$$ 
        To bound $A$, we note that by assumption that $\norm{f(x)-x}_1>\eps$ but $\sum_{j=3}^k|f(x)_j-x_j|\leq \eps\gamma/4$, we have $f(x)_1-x_1>(\eps-\eps\gamma/4)/2$. This implies that
        $$x^*_1\leq x_1+\eps\gamma/8<x_1+(\eps-\eps\gamma/4)/2<f(x)_1.$$
        Note also that since $x^*_1\leq x_1+\eps\gamma/8$, we have $|x_1-x^*_1|\leq (x_1-x^*_1)+\eps\gamma/4$.
        Thus 
        \begin{align*}
            A &= |x_1 - x^*_1| - |f(x)_1 - x^*_1| + |x_2 - x^*_2| - |f(x)_2 - x^*_2| \\
              &= |x_1 - x^*_1| - (f(x)_1 - x^*_1) + |x_2 - x^*_2| - |f(x)_2 - x^*_2| \\
              &\leq \eps\gamma/4 + (x_1 - x^*_1) - (f(x)_1 - x^*_1) + |x_2 - x^*_2| - |f(x)_2 - x^*_2| \\
              &= \eps\gamma/4 + (x_1 - f(x)_1) + |x_2 - x^*_2| - |f(x)_2 - x^*_2| \\
              &\leq \eps\gamma/4 + x_1 - f(x)_1+ f(x)_2-x_2\\
              &\leq \eps\gamma/4,
        \end{align*}
        where the last inequality follows from $|f(x)_1-x_1|\geq |f(x)_2-x_2|$. This finishes the proof that $\norm{x-x^*}_1-\norm{f(x)-x^*}_1\leq \eps\gamma/2$, and the proof of the lemma as well.
    \end{proof}

    Given \Cref{lemma: timexxx}, we know that at the end of each round, we have $p_{\min,i}\leq x^*_i \leq p_{\max,i}$ for both $i\in\set{1,2}$.
    Furthermore, if $\|p_{\max}-p_{\min}\|_\infty<\eps\gamma/8$ at the beginning of a round, then $x$ on line~\ref{returnnnnnnn} must be an $\eps$-fixed point of $f$ during that round.
    Since initially $p_{\max,i}-p_{\min,i}=1$, we know that the algorithm must terminate within $O(\log(1/\eps\gamma))$ rounds to find an $\eps$-fixed point.
    
    The recursive expression for the time complexity of the algorithm is $$T(\eps,\gamma,k)=O(\log(1/\eps\gamma))\cdot \big(O(1)+T(\eps\gamma,\gamma,k-2)\big).$$
    Solving this recurence shows that $T(\eps,\gamma,k)=O(k\log^{\lceil k/2\rceil}(1/\eps\gamma))$.

\section{Discussion and Future Directions}

In this paper we gave $O(\log^{\lceil k/2\rceil}(1/\eps))$-time algorithms for 
  $\NonExp_\infty(\eps,k)$ and $\NonExp_1(\eps,k)$.~{A natural next step is to give an $O(\log(1/\eps))$-time algorithm for $\NonExp_{\infty}(\eps,3)$ or $\NonExp_1(\eps,3)$. Such an algorithm for $\NonExp_{\infty}(\eps,3)$ in particular will imply an $O(\log^{\lceil k/3\rceil}(1/\eps))$-time algorithm for general $k$ by \Cref{thm: linfty decomposition}.} However, there are some challenges. {At a high level, the $O(\log(1/\eps))$-time algorithms for both $\NonExp_{\infty}(\eps,2)$ and $\NonExp_{1}(\eps,2)$ are due to the crucial observation that the search space can always be maintained convex. This is no longer true when $k=3$ for both problems, even after one query (see figures in \cite{CLY25} for the case with the $\ell_\infty$-norm).} 

Another direction is to understand trade-offs between time and query complexities. Note that there are highly query-efficient algorithms~\cite{CLY25,haslebacher2025query} for contraction fixed points; \mbox{however,} known implementions of these query-efficient algorithms all have time complexity $\Omega((1/\eps)^k)$ due to their brute-force nature. On the other hand, when focusing on time-efficient algorithms such~as those of \cite{SS02,SS03,FGMS20} as well as the present paper, their query complexities are essentially the same as the time complexities. Can we achieve some intermediate results? For example, concretely, can one give an algorithm for either $\NonExp_{\infty}(\eps,3)$ or $\NonExp_{1}(\eps,3)$ with time complexity $O(\log^2(1/\eps))$ but only uses $O(\log(1/\eps))$ many queries?

\begin{flushleft}
\bibliographystyle{alpha}
\bibliography{ref}

\newcommand{\etalchar}[1]{$^{#1}$}
\begin{thebibliography}{DGK{\etalchar{+}}17}

\bibitem[Ban22]{Ban1922}
Stefan Banach.
\newblock Sur les op{\'e}rations dans les ensembles abstraits et leur application aux {\'e}quations int{\'e}grales.
\newblock {\em Fundamenta mathematicae}, 3(1):133--181, 1922.

\bibitem[Bel57]{bellman1957markovian}
Richard Bellman.
\newblock A {M}arkovian decision process.
\newblock {\em Journal of mathematics and mechanics}, pages 679--684, 1957.

\bibitem[BFG{\etalchar{+}}25]{batzioumonotonecontraction}
Eleni Batziou, John Fearnley, Spencer Gordon, Ruta Mehta, and Rahul Savani.
\newblock Monotone contractions.
\newblock In {\em Proceedings of the 57th Annual ACM Symposium on Theory of Computing}, pages 507--517, 2025.

\bibitem[CJK{\etalchar{+}}22]{CJKLS22}
Cristian~S. Calude, Sanjay Jain, Bakhadyr Khoussainov, Wei Li, and Frank Stephan.
\newblock Deciding parity games in quasi-polynomial time.
\newblock {\em {SIAM} J. Comput.}, 51(2):17--152, 2022.

\bibitem[CL55]{CL55}
E.~A. Coddington and N.~Levinson.
\newblock {\em Theory of Ordinary Differential Equations}.
\newblock McGraw Hill, 1955.

\bibitem[CL22]{CL22}
Xi~Chen and Yuhao Li.
\newblock {Improved Upper Bounds for Finding Tarski Fixed Points}.
\newblock In {\em {EC}}, pages 1108--1118. {ACM}, 2022.

\bibitem[CLY25]{CLY25}
Xi~Chen, Yuhao Li, and Mihalis Yannakakis.
\newblock Computing a fixed point of contraction maps in polynomial queries.
\newblock {\em J. ACM}, 72(4), July 2025.

\bibitem[Con92]{Con92}
Anne Condon.
\newblock The complexity of stochastic games.
\newblock {\em Information and Computation}, 96(2):203--224, 1992.

\bibitem[Den67]{De67}
Eic~V. Denardo.
\newblock Contraction mappings underlying dynamic programming.
\newblock {\em SIAM Review}, 9(2):165--177, 1967.

\bibitem[DGK{\etalchar{+}}17]{dohrau2017arrival}
J{\'e}r{\^o}me Dohrau, Bernd G{\"a}rtner, Manuel Kohler, Ji{\v{r}}{\'\i} Matou{\v{s}}ek, and Emo Welzl.
\newblock {ARRIVAL: a zero-player graph game in NP $\cap$ coNP}.
\newblock In {\em A Journey Through Discrete Mathematics: A Tribute to Ji{\v{r}}{\'\i} Matou{\v{s}}ek}, pages 367--374. Springer, 2017.

\bibitem[DP11]{DP11}
Constantinos Daskalakis and Christos~H. Papadimitriou.
\newblock Continuous local search.
\newblock In {\em Proceedings of the Twenty-Second Annual {ACM-SIAM} Symposium on Discrete Algorithms, {SODA} 2011}, pages 790--804. {SIAM}, 2011.

\bibitem[DTZ18]{DTZ18}
Constantinos Daskalakis, Christos Tzamos, and Manolis Zampetakis.
\newblock A converse to {B}anach's fixed point theorem and its {CLS}-completeness.
\newblock In Ilias Diakonikolas, David Kempe, and Monika Henzinger, editors, {\em Proceedings of the 50th Annual {ACM} {SIGACT} Symposium on Theory of Computing, {STOC} 2018, Los Angeles, CA, USA, June 25-29, 2018}, pages 44--50. {ACM}, 2018.

\bibitem[EJ91]{EJ91}
E.~A. Emerson and C.~Jutla.
\newblock Tree automata, $\mu$-calculus and determinacy.
\newblock In {\em Proceedings IEEE Symp. on Foundations of Computer Science}, pages 368--377, 1991.

\bibitem[EM79]{ehrenfeucht1979positional}
Andrzej Ehrenfeucht and Jan Mycielski.
\newblock Positional strategies for mean payoff games.
\newblock {\em International Journal of Game Theory}, 8(2):109--113, 1979.

\bibitem[EY10]{EY10}
Kousha Etessami and Mihalis Yannakakis.
\newblock On the complexity of {N}ash equilibria and other fixed points.
\newblock {\em {SIAM} J. Comput.}, 39(6):2531--2597, 2010.

\bibitem[FGHS23]{FGHS23}
John Fearnley, Paul Goldberg, Alexandros Hollender, and Rahul Savani.
\newblock The complexity of gradient descent: {CLS} = {PPAD} {\(\cap\)} {PLS}.
\newblock {\em J. {ACM}}, 70(1):7:1--7:74, 2023.

\bibitem[FGMS20]{FGMS20}
John Fearnley, Spencer Gordon, Ruta Mehta, and Rahul Savani.
\newblock Unique end of potential line.
\newblock {\em J. Comput. Syst. Sci.}, 114:1--35, 2020.

\bibitem[FPS22]{FPS22}
John Fearnley, D{\"o}m{\"o}t{\"o}r P{\'a}lv{\"o}lgyi, and Rahul Savani.
\newblock A faster algorithm for finding {T}arski fixed points.
\newblock {\em ACM Transactions on Algorithms (TALG)}, 18(3):1--23, 2022.

\bibitem[GHH{\etalchar{+}}18]{gartner2018arrival}
Bernd G{\"a}rtner, Thomas~Dueholm Hansen, Pavel Hub{\'a}cek, Karel Kr{\'a}l, Hagar Mosaad, and Veronika Sl{\'\i}vov{\'a}.
\newblock {ARRIVAL: Next Stop in CLS}.
\newblock In {\em 45th International Colloquium on Automata, Languages, and Programming (ICALP 2018)}, pages 60--1. Schloss Dagstuhl--Leibniz-Zentrum f{\"u}r Informatik, 2018.

\bibitem[GHH21]{gartner2021subexponential}
Bernd G{\"a}rtner, Sebastian Haslebacher, and Hung~P Hoang.
\newblock {A Subexponential Algorithm for ARRIVAL}.
\newblock In {\em 48th International Colloquium on Automata, Languages, and Programming (ICALP 2021)}, pages 69--1. Schloss Dagstuhl--Leibniz-Zentrum f{\"u}r Informatik, 2021.

\bibitem[G{\"u}n89]{gunther1989}
Matthias G{\"u}nther.
\newblock Zum einbettungssatz von {J. Nash}.
\newblock {\em Mathematische Nachrichten}, 144(1):165--187, 1989.

\bibitem[Has25]{haslebacher2025arrival}
Sebastian Haslebacher.
\newblock {ARRIVAL: Recursive Framework $\ell_1$-Contraction}.
\newblock In {\em 52nd International Colloquium on Automata, Languages, and Programming (ICALP 2025)}, pages 95--1. Schloss Dagstuhl--Leibniz-Zentrum f{\"u}r Informatik, 2025.

\bibitem[HKS99]{HKS99}
Z.~Huang, Leonid~G. Khachiyan, and Christopher~(Krzysztof) Sikorski.
\newblock Approximating fixed points of weakly contracting mappings.
\newblock {\em J. Complex.}, 15(2):200--213, 1999.

\bibitem[HLSW25]{haslebacher2025query}
Sebastian Haslebacher, Jonas Lill, Patrick Schnider, and Simon Weber.
\newblock Query-efficient fixpoints of $\ell_p$-contractions.
\newblock In {\em {FOCS}}, 2025.
\newblock To appear.

\bibitem[Hol21]{Hol21}
Alexandros Hollender.
\newblock {\em Structural results for total search complexity classes with applications to game theory and optimisation}.
\newblock PhD thesis, University of Oxford, {UK}, 2021.

\bibitem[How60]{howard1960dynamic}
Ronald~A Howard.
\newblock {\em Dynamic programming and {M}arkov processes.}
\newblock John Wiley, 1960.

\bibitem[Jur98]{jurdzinski1998deciding}
Marcin Jurdzi{\'n}ski.
\newblock {Deciding the winner in parity games is in UP $\cap$ co-UP}.
\newblock {\em Information Processing Letters}, 68(3):119--124, 1998.

\bibitem[Kar17]{karthik2017did}
CS~Karthik.
\newblock Did the train reach its destination: The complexity of finding a witness.
\newblock {\em Information Processing Letters}, 121:17--21, 2017.

\bibitem[MP91]{MegiddoPapadimitriou}
Nimrod Megiddo and Christos~H. Papadimitriou.
\newblock On total functions, existence theorems and computational complexity.
\newblock {\em Theor. Comput. Sci.}, 81(2):317--324, 1991.

\bibitem[Nas56]{nash1956imbedding}
John Nash.
\newblock The imbedding problem for {R}iemannian manifolds.
\newblock {\em Annals of mathematics}, 63(1):20--63, 1956.

\bibitem[Sha53]{Shapley53}
L.~Shapley.
\newblock Stochastic games.
\newblock {\em Proc. Nat. Acad. Sci.}, 39(10):1095--1100, 1953.

\bibitem[SLJ89]{stokey1989recursive}
Nancy~L Stokey and Robert~E Lucas~Jr.
\newblock {\em Recursive methods in economic dynamics}.
\newblock Harvard University Press, 1989.

\bibitem[SS02]{SS02}
Spencer~D. Shellman and Kris Sikorski.
\newblock A two-dimensional bisection envelope algorithm for fixed points.
\newblock {\em J. Complex.}, 18(2):641--659, 2002.

\bibitem[SS03]{SS03}
Spencer~D. Shellman and Christopher~(Krzysztof) Sikorski.
\newblock A recursive algorithm for the infinity-norm fixed point problem.
\newblock {\em J. Complex.}, 19(6):799--834, 2003.

\bibitem[STW93]{STW93}
Christopher~(Krzysztof) Sikorski, Chey{-}Woei Tsay, and Henryk Wozniakowski.
\newblock An ellipsoid algorithm for the computation of fixed points.
\newblock {\em J. Complex.}, 9(1):181--200, 1993.

\bibitem[ZP96]{ZP96}
Uri Zwick and Mike Paterson.
\newblock The complexity of mean payoff games on graphs.
\newblock {\em Theor. Comput. Sci.}, 158(1{\&}2):343--359, 1996.

\end{thebibliography}
\end{flushleft}

\end{document}